



\documentclass[10pt,sigconf]{acmart}
\usepackage{enumerate}
\usepackage{amsmath}
\usepackage{amsthm}
\usepackage{mathtools}
\usepackage{booktabs}
\usepackage{graphicx}
\usepackage{subcaption}
\usepackage{multirow}
\usepackage{adjustbox}
\usepackage{url}
\usepackage{color}
\usepackage{xspace}
\usepackage{xifthen}
\usepackage[noend]{algpseudocode}
\usepackage[linesnumbered,ruled,vlined]{algorithm2e}


\AtBeginDocument{%
  \providecommand\BibTeX{{%
    \normalfont B\kern-0.5em{\scshape i\kern-0.25em b}\kern-0.8em\TeX}}}

\setcopyright{acmcopyright}
\copyrightyear{2021}
\acmYear{2021}
\acmDOI{10.1145/1122445.1122456}

\acmConference[PODS '21]{Proceedings of the 40th ACM SIGMODSIGACT-SIGAI 
Symposium on Principles of Database Systems (PODS’21)}{}{}
\acmBooktitle{Proceedings of the 40th ACM SIGMODSIGACT-SIGAI 
Symposium on Principles of Database Systems (PODS’21)}
\acmPrice{15.00}
\acmISBN{978-1-4503-XXXX-X/18/06}



\newcommand{\blit}[1]{\color{black}#1\color{black}\xspace}
\newcommand{\code}[1]{\mathcal{#1}}
\newcommand{\E}{\mathbb{E}}
\newcommand{\eat}[1]{}
\newcommand{\eps}{\varepsilon}
\newcommand{\INDEX}{\textsf{Index}\xspace}
\newcommand{\mat}[1]{\mathbf{#1}}
\newcommand{\poly}[1]{\operatorname{poly}(#1)}
\newcommand{\prob}{\mathbb{P}}
\newcommand{\QstarFull}[2]{\textsf{star}^{#2}(#1)}
\newcommand{\Qstar}[1]{\textsf{star}(#1)}
\newcommand{\R}{\mathbb{R}}
\newcommand{\ra}[1]{\renewcommand{\arraystretch}{#1}} 
\newcommand{\redit}[1]{#1}
\newcommand{\supp}[1]{\operatorname{supp}(#1)}
\newcommand{\uniform}[1]{\texttt{uSample}{$(#1)$}}
\newcommand{\zerovec}{\mathbf{0}}
\newcommand{\powerset}[1]{\mathcal{P}\left( #1 \right)}
\newcommand{\net}[1][]{%
\ifthenelse{\isempty{#1}}{$\alpha$-net}{$ #1 $-net}%
}
\newcommand{\neighbour}[1][]{%
\ifthenelse{\isempty{#1}}{$\alpha$-neighbour}{$ #1 $-neighbour}%
}
\newtheorem{remark}{Remark} 

\begin{document}

\title[Bounds on Projected Frequency Estimation]{Subspace exploration: Bounds on Projected Frequency Estimation}

\author{Graham Cormode}
\affiliation{%
  \institution{University of Warwick}
}

\author{Charlie Dickens}
\affiliation{%
  \institution{University of Warwick}
}

\author{David P. Woodruff}
\affiliation{%
  \institution{Carnegie Mellon University}
}

\renewcommand{\shortauthors}{Cormode, Dickens, Woodruff}

\begin{abstract}
  Given an $n \times d$ dimensional dataset $A$, a projection query
  specifies a subset $C \subseteq [d]$ of columns which yields a new
  $n \times |C|$ array.
  We study the space complexity of computing data analysis
  functions over such subspaces, including heavy hitters and norms, when the
  subspaces are revealed only after observing the data.
  We show that this important class of problems is typically hard:
  for many problems, we show $2^{\Omega(d)}$ lower bounds.
  However, we present upper bounds
  which demonstrate space dependency better than $2^d$.
  That is, for $c,c' \in (0,1)$ and a parameter $N=2^d$ an $N^c$-approximation can be obtained
  in space $\min(N^{c'},n)$, showing that it is possible to improve on the na\"{i}ve approach
  of keeping information for all $2^d$ subsets of $d$ columns. 
  Our results are based on careful constructions of instances using coding
  theory and novel combinatorial reductions
  that exhibit such space-approximation tradeoffs.
\end{abstract}

\begin{CCSXML}
<ccs2012>
   <concept>
       <concept_id>10003752.10003753.10003760</concept_id>
       <concept_desc>Theory of computation~Streaming models</concept_desc>
       <concept_significance>500</concept_significance>
       </concept>
   <concept>
       <concept_id>10003752.10003809.10010055.10010058</concept_id>
       <concept_desc>Theory of computation~Lower bounds and information complexity</concept_desc>
       <concept_significance>500</concept_significance>
       </concept>
  <concept>
       <concept_id>10003752.10003777.10003780</concept_id>
       <concept_desc>Theory of computation~Communication complexity</concept_desc>
       <concept_significance>500</concept_significance>
       </concept>
  <concept>
  <concept_id>10003752.10003809.10010055.10010057</concept_id>
  <concept_desc>Theory of computation~Sketching and sampling</concept_desc>
  <concept_significance>500</concept_significance>
  </concept>
</ccs2012>
\end{CCSXML}
\ccsdesc[500]{Theory of computation~Streaming models}
\ccsdesc[500]{Theory of computation~Lower bounds and information complexity}
\ccsdesc[500]{Theory of computation~Communication complexity}
\ccsdesc[500]{Theory of computation~Sketching and sampling}

\keywords{projection queries, distinct elements, frequency moments} 


\maketitle

\section{Introduction}
  In many data analysis scenarios, datasets of interest
  are of moderate to high dimension, but many of these dimensions are spurious
  or irrelevant.
  Thus, we are interested in subspaces, corresponding to the data
  projected on a particular subset of dimensions.
  Within each subspace, we are concerned with computing
  statistics, such as norms, measures of variation, or finding common
  patterns.
  Such calculations are the basis of subsequent analysis, such as
  regression and clustering.
  In this paper, we introduce and formalize novel problems related to
  functions of the frequency in such projected subspaces.
  Already, special cases such as subspace projected distinct elements
  have begun to generate interest, e.g.,
  in Vu's work \cite{HoaVu2018}, and as an open problem in sublinear
  algorithms \cite{sublinear94}.

  In more detail, we consider the original data to be represented by a
  (usually binary) array with $n$ rows of $d$ dimensions.
  A subspace is defined by a set $C \subseteq [d]$ of columns, which
  defines a new array with $n$ rows and $|C|$ dimensions.
  Our goal is to understand the complexity of answering
  queries, such as which rows occur most frequently in the projected
  data, computing frequency moments over the rows, and so on.
  If $C$ is provided prior to seeing the data, then the projection can
  be performed online, and so many of these tasks reduce to
  previously studied questions.
  Hence, we focus on the case when $C$ is decided \emph{after} the data is
  seen.
  In particular, we may wish to try out many different choices of $C$
  to explore the structure of the subspaces of the data.
  Our model is given in detail in Section \ref{sec: preliminaries}.

  For further motivation, we outline some specific areas where such problems arise.

  \begin{itemize}
  \item
    \textbf{Bias and Diversity.}
    A growing concern in data analysis and machine learning is whether outcomes are `fair'
    to different subgroups within the population, or whether they
    reinforce existing disparities.
    A starting point for this is to quantify the level of bias within
    the data when different features are considered.
    That is, we want to know whether certain combinations of attribute
    values are over-represented in the data (heavy hitters), and
    how many different combinations of values are represented in the data
    (captured by measures like $F_0$).
    We would like to be able to answer such queries accurately for many
    different (typically overlapping) subsets of dimensions.
  \item
    \textbf{Privacy and Linkability.}
    When sharing datasets, we seek assurance that they are not
    vulnerable to attacks that exploit structure in the data to re-identify individuals.
    An attempt to quantify this risk is given in recent
    work~\cite{khll}, which asks how many distinct values occur in the data for each
    partial identifier, specified as a subset of dimensions.
    This prior work considered the case where the target dimensions are known in
    advance, but more generally we would like to compute such measures
    for arbitrary subsets, based on frequency moments and sampling
    techniques.
  \item
    \textbf{Clustering and Frequency Analysis.}
    In the area of clustering, the notion of subspaces has been studied
    under a number of interpretations.
    The common theme is that the
    data may look unclustered in the original space due to spurious
    dimensions inflating the distance between points that are otherwise close.
    Many papers addressed this as a search problem:
    to search through exponentially many subspaces to find those in which
    the data is well-clustered.  See the survey by Parsons, Haque and
    Liu~\cite{Parsons:Haque:Liu:04}.
    In our setting, the problem would be to estimate various measures of
    density or clusteredness for a given subspace.
    A related problem is to find subspaces (or ``subcubes'' in database
    terminology) that have high frequency.
    Prior work proceeded under strong statistical
    independence assumptions about the values in different
    dimensions, for example, that the distribution can be modeled
    accurately with a (Na\"ive) Bayesian model~\cite{Kveton:Muthukrishnan:Vu:Xian:18}.
\end{itemize}

\section{Preliminaries and Definitions} \label{sec: preliminaries}
For a positive integer $Q$, let $[Q] = \{0,1,\ldots, Q-1\}$,
and $A \in [Q]^{n \times d}$ be the input data.
The objective is to keep a summary of $A$ which is used to
estimate the solution to a problem $\mathbf{P}$ upon receiving a column subset
query $C \subseteq [d]$.
Problems $\mathbf{P}$ of interest are described in Section \ref{sec: problem-defs}.
Define the restriction of $A$ to the columns indexed by $C$ as $A^C$
whose rows $A^C_i$, $1 \le i \le n$, are vectors over $[Q]^{|C|}$.
We use the Minkowski norm
$\|X\|_p = (\sum_{i,j} |X_{ij}|^p )^{1/p}$ to
denote the entrywise-$\ell_p$ norm
for vectors ($j=1$) and matrices ($j > 1$).


\medskip
\noindent
\textbf{Computational Model.}
\eat{We will access a matrix through its frequency vector representation in the
following way: for the matrix $A$ and a column query $C$, we ask questions of
the underlying frequency vector $f(A,C) \in [Q]^{|C|}$ whose entries are $f_i(A,C)$.
The value $f_i(A,C)$ is the frequency of string $i \in [Q]^{|C|}$ observed on
$A^C$.}
First, the data $A$ is received under the assumption that it is too large to hold
entirely in memory so can be modeled as a stream of data.
Our lower bounds are not strongly dependent on the order in which the
data is presented.
After observing $A$, a {\it column query $C$} is presented.
The frequency vector over $A$ induced by $C$ is $f = f(A,C)$ whose entries 
$f_i(A,C)$ denote the frequency of $Q$-ary word $w_i \in [Q]^{|C|}$.
We study functions of the frequency vector $f = f(A,C)$
after the observation of $A$ and receiving column query $C$.
The task is, during the observation phase, to design a summary of $A$ which
approximates statistics of $A^C$, the restriction of $A$ to its projected subspace $C$.
Approximations of $A^C$ are accessed through the frequency vector $f(A,C)$.
Note that functions (e.g., norms) are taken over $f(A,C)$ as opposed to the
raw vector inputs from the column projection.

\begin{remark}[Indexing $Q$-ary words into $f$]
\label{rem:index-function}
Recall that the frequency vector $f(A,C)$ has length 
$Q^{|C|}$ with each entry $f_i$ counting the occurrences of 
word $w_i \in [Q]^{|C|}$. 
To clearly distinguish between the (scalar) \textit{index} $i$ of $f$
and the input vectors $w_i$ whose frequency is measured by $f_i$ we 
introduce the \textbf{index function} $e(w_i) = i$.
We may think of $e(\cdot)$ as simply the canonical mapping from 
$[Q]^{|C|}$ into $\{0,1,2,\dots, Q^C -1 \}$, 
but other suitable bijections may be used.
\end{remark}

For example, suppose $Q=2$ and $A \in \{0,1\}^{5 \times 3}$ with column
indices $\{1,2,3\}$ given below.
If $C = \{1,2\}$, then using the canonical mapping from $\{0,1\}^{|C|}$ into $\{0,1,2,3\}$ (e.g
$e(00) = 0, e(01) = 1, \dots e(11) = 3$)
we obtain $A^C$ and hence $f(A,C) = (1,1,0,3)$.
\begin{equation*}
  A =
\begin{bmatrix}
1 & 1 & 0 \\
0 & 1 & 0 \\
0 & 0 & 1 \\
1 & 1 & 1 \\
1 & 1 & 0 \\
\end{bmatrix}
\qquad \longrightarrow \qquad
A^C =
\begin{bmatrix}
1 & 1  \\
0 & 1  \\
0 & 0  \\
1 & 1  \\
1 & 1  \\
\end{bmatrix}
\end{equation*}
The vector $f = f(A,C)$ is then the frequency vector over which we seek to
compute statistical queries such as $\|f\|_0$.
In this example, $\|f\|_0 = 3$ (there are three distinct rows in
$A^C$),
while $\|f\|_1 = 5$ is independent of the choice of $C$.

\medskip
\noindent
\subsection{Problem Definitions.} \label{sec: problem-defs}
The problems that we consider
are
column-projected forms of common streaming
problems (\cite{kane2010optimal}, \cite{braverman2017bptree},
\cite{braverman_et_al:LIPIcs:2018:9411}).
Here, we refer to these problems as ``projected frequency estimation
problems'' over the input $A$.
We define 
\begin{align}
  f_i(A,C) &= |\{j : A_j^{C} = w_i,
                              j \in [n] \}| \\
  F_p(A,C) &= \sum_{i \in \{0,1\}^{|C|}} f_i(A,C)^p.
\end{align}

\begin{itemize}
  \item{\textbf{$F_p$ estimation}:
    Given a column query $C$,
    the $F_p$ estimation problem is to approximate the quantity
    $F_p(A,C) = \|f(A,C)\|_p^p$ under some measure of approximation to be
  specified later (e.g., up to a constant factor).
  Of particular interest to us is
  (projected) $F_0(A,C)$ estimation, which counts the number of distinct
    row patterns in $A^C$. }

  \item{\textbf{$\ell_p$-heavy hitters}:
  The query is specified by
    a column query $C \subseteq [d]$, a choice of metric/norm $\ell_p, p>0$
  and accuracy parameter $\phi \in (0,1)$.
%
  The task is then to identify all patterns $w_i$ observed on $A^C$ for which
  $f_i(A,C) \ge \phi \| f(A,C) \|_p$.
  }
  Such values $w_i$ (or equivalently $i$) are called {\it $\phi$-$\ell_p$-heavy hitters},
  or simply {\it $\ell_p$-heavy hitters} when $\phi$ is fixed.
  We will consider a multiplicative approximation based on a parameter
  $c > 1$, where we require
  that all
  $\phi$-$\ell_p$ heavy hitters are reported, and no items
  with weight less than $(\phi/c) \cdot \|f(A,C)\|_p$ are included. 

\item{\textbf{$\ell_p$-frequency estimation}:
    A related problem is to allow the frequency $f_i(A,C)$ to be estimated
accurately, with error as a fraction of $F_p(A,C)^{1/p}=\|f(A,C)\|_p$, which we
refer to as {\it $\ell_p$ frequency estimation}.
Specifically, for a given $w_i$, return an estimate $\hat{f}_i$ which
satisfies $|\hat{f}_i(A,C) - f_i(A,C)| \le \phi \|f(A,C)\|_p$.}
  
  \item{\textbf{$\ell_p$ sampling}:
  \eat{sample rows of $A_i^C$ according to the distribution}
  The goal of this sampling problem is to sample patterns $w_i$
    according to the distribution
  \eat{$p_i \in (1 \pm \eps) \frac{\|A_i^C\|_p^p}{\|A^C\|_p^p} + \Delta$,
  \[ p_i \in (1 \pm \eps) \frac{\|f_i(A,C)\|_p^p}{\|f(A,C)\|_p^p} + \Delta \]}
  $ p_i \in (1 \pm \eps) \frac{f^p_i(A,C)}{\|f(A,C)\|_p^p} + \Delta $
  where $\Delta = 1/\poly{nd}$, and return a $(1 \pm \eps')$-approximation to
  the probability $p_i$ of the item $w_i$ returned. }

\end{itemize}

When clear, we may drop the dependence upon $C$ in the notation and
write $f_i$ and $F_p$ instead.
We will use $\tilde{O}$ and $\tilde{\Omega}$ notation to supress
factors that are polylogarithmic in the leading term.
For example, lower bounds stated as $\tilde{\Omega}(2^d)$ suppress
terms polynomial in $d$.

\subsection{Related Work}

The model we study is reminscent of, but distinct from, some related
formulations.
In the problem of cascaded aggregates~\cite{Jayram:Woodruff:09}, we imagine the
starting data as a matrix, and apply a first operator (denoted $Q$) on
each row to obtain a vector, on which we apply a second operator $P$.
Our problems can be understood as special cases of cascaded aggregates
where $Q$ is a project-then-concatenate operator, to obtain a vector
whose indices correspond to the concatenation of the projection of a
row. Another example of a cascaded aggregate is a so-called correlated
aggregate \cite{tw12}, but this was only studied in the context of two
dimensions.
To the best of our knowledge, our projection-based
definitions have not been previously studied
under the banner of cascaded aggregates.

Other work includes results on provisioning queries for analytics
\cite{AssadiKLT16}, but the way these statistics are defined is
different from our formulation.
In that setting there are different scenarios (``hypotheticals'') that may or may not
be turned on: this corresponds to ``what-if'' analysis whereby a query is roughly 
``how many items are observed if a given set of columns is present (turned on)?''
The number of distinct elements for the query is the union of the number
of distinct elements across scenarios. In our setting, we concatenate the
distinct items into a row vector and count the number of distinct vectors. Note
that in the hypotheticals setting in the binary case,
each column only has $2$ distinct values, $0$ and $1$,
and thus the union also only has $2$ distinct values. However, we can obtain
up to $2^d$ distinct vectors. Consequently, Assadi et al. are able to achieve
$\poly{d/\eps}$ space for counting distinct elements, whereas we show
a $2^{\Omega(d)}$ lower bound. Moreover, they achieve a $2^{\Omega(d)}$ lower
bound for counting (i.e., $F_1$), whereas we achieve a constant upper bound.
These disparities highlight the differences in our models.

More recently, the notion of ``subset norms'' was introduced by
Braverman, Krauthgamer and Yang~\cite{Braverman:Krauthgamer:Yang:18}.
This problem considers an input that defines a vector $v$, where the
objective is to take a subset $s$ of entries of $v$ and compute the norm.
Results are parameterized by the ``heavy hitter dimension'', which is a measure
of complexity over
the set system from which $s$ can be drawn.
While sharing some properties with our scenario, the results for this
model are quite different.
In particular, in~\cite{Braverman:Krauthgamer:Yang:18} a trivial upper bound follows by maintaining the vector
$v$ explicitly, of dimension $n$.
Meanwhile, many of our results show lower bounds that are exponential
in the dimensionality, as $2^{\Omega(d)}$, though we also obtain non-trivial upper bounds. 

\section{Contributions}


The main challenge here is that the column query
$C$ is revealed {\it after} observing the data;
consequently, applying a known algorithm to just the columns $C$ as
the data arrives is not possible.
For example, consider the exemplar problem of counting the number of
distinct rows under the projection $C$, i.e., the projected $F_0$
problem.
Recall that $A^C_{i}$ denotes the $i$-th row of array $A^C$.
Then the task is to count the number of distinct rows observed in $A^C$, i.e.,
\begin{align*}
  F_0(A,C) &= | \{ A_{j}^C : j \in [n] \} |  
           = \| f(A,C) \|_0.
\end{align*}

Observe that $F_0(A,C)$ can vary widely over different choices of
$C$.
For example, even for a binary input $A \in \{0,1\}^{n \times d}$, $F_0(A,C)$
can be as large as $2^d$ when $C$ consists of all columns from a highly
diverse dataset, and as small as $1$
or $2$ when $C$ is a single column or when $C$ selects homogeneous
columns (e.g., the columns in $C$ are all zeros).

\subsection{Summary of Results}
\eat{For the above problems we obtain the following results:
\begin{itemize}
  \item{For $F_p$ estimation we show space  bounds of
  $2^{\Omega(d)}$ for $p \ne 1$, while $O(\log n)$ suffices for
    for $p=1$.
    Similar lower bounds apply for $\ell_p$ sampling of rows. }

  \item{We show that the $\ell_p$-Heavy hitters can be obtained by sampling
  $\poly{d/\eps}$ rows from $A$ for $p\le 1$, but the problem has no polynomial
    upper bound when $p>1$.}
    
\end{itemize}
}

\blit{
Our main focus, in common with prior work on streaming algorithms, is on space complexity.
For the above problems we obtain the following results:
\begin{itemize}
\item{In Section \ref{sec: f0-bounds} we show that projected $F_0$ estimation
  requires $2^{\Omega(d)}$ space for a constant factor approximation,
  demonstrating the essential hardness of these problems. Nevertheless, we obtain a tradeoff
  in terms of upper bounds described below.
}

\item{Section \ref{sec: l_p-problems} presents results for $\ell_p$
frequency estimation, $\ell_p$ heavy hitters, $F_p$ estimation, and $\ell_p$
sampling.
We show a space upper bound of $O(\eps^{-2} \log(1/\delta))$ for
$\ell_p$ frequency estimation when $0 < p < 1$
and complement this result with lower bounds for heavy hitters when $p>1$,
$F_p$ estimation
and $\ell_p$ sampling for all $p \ne 1$, showing that these problems
require $2^{\Omega(d)}$ bits of space.}
\item
  In Section~\ref{sec:upperbounds}
we show upper bounds for $F_0$ and $F_p$ estimation which improve on
the exhaustive approach of keeping summaries of all $2^d$ subsets of columns,
by showing that we can obtain coarse approximate answers with a
smaller subset of materialized answers.  
Specifically, for parameters $N=2^d$ and $\alpha \in (0,1)$ we can obtain an  
$N^{\alpha}$ approximation in $\min \left( N^{H(1/2 - \alpha)}, n\right)$ space.
Since the binary entropy function $H(x) < 1$, this bound is better 
than the trivial $2^d$ bound.
\end{itemize}
}

These bounds show that there is no possibility of ``super efficient''
solutions that use space less than exponential in $d$.
Nevertheless, we demonstrate some solutions whose dependence is still
exponential but weaker than a na\"ive $2^d$. Thinking of $N = 2^d$, the above upper and lower
bounds imply the actual complexity is a nontrivial polynomial function of $N$. 

The bounds also show novel dichotomies that are not present in comparable
problems without projection.
In particular, we show that (projected) $\ell_p$ sampling is difficult for $p \ne 1$ while
(projected) $\ell_p$-heavy hitters has a small space algorithm for $0<p<1$.
This differs from the standard streaming model in which the (classical) $\ell_p$ heavy hitters problem has a small space solutions for
$p\le 2$ without projection~\cite{Larsen:Nelson:Nguyen:Thorup:16}, and
(classical) $\ell_p$ sampling can be performed efficiently for $p\le
2$~\cite{Jayaram:Woodruff:18}. Our lower bounds are built on
amplifying the frequency of target codewords for a carefully chosen
test word. 

Note that there are trivial na\"ive solutions which simply retain the entire input
and so answer the query exactly on the query $C$:
to do so takes $\Theta(nd)$ space, noting that $n$ may be exponential in $d$.
Alternatively, if we know $t = |C|$ then we may enumerate all ${d \choose t}$
subsets of
$[d]$ with size $t$ and maintain (approximate) summaries for each
choice of $C$.
However, this will entail a cost of at least $\Omega(d^{t})$ and as such
does not give a major reduction in cost.

\subsection{Coding Theory Definitions} \label{sec: code-property}

Our lower bounds will typically make use of a \textit{binary} code
$\mathcal{C}$, constituted of a collection of \textit{codewords},
which are vectors (or strings) of fixed length.
We write $\mathcal{B}(l,k)$ to denote all binary strings of length $l$ and
(Hamming) weight $k$.
We first consider the dense, low-distance
family of codes $\mathcal{C} = \mathcal{B}(d,k)$ but will later use 
more sophisticated randomly sampled codes.
%
{When $k < d/2$, we have
${d \choose k} \ge  \left( {d}/{k}\right)^k$ and when
$k = d/2$, we have
${d \choose d/2} \ge {2^d}/{\sqrt{2d}}$.
 A trivial but crucial property of $\mathcal{B}(d,k)$ is that any two codewords from
  this set can have intersecting $1$s in at most $k - 1$ positions.}

    {
We define the {\em support of a string $y$} as $\supp{y} = \{ i : y_i
  \neq 0\}$, the set of locations where $y$ is non-zero.
We define
\textit{child words} to be the set of new codewords obtained from $\mathcal{C}$
  by generating all $Q$-ary words $z$ with $\supp{z} \subseteq \supp{y}$ for
  some $y \in \mathcal{C}$, and construct them with the \textsf{star} operator
  defined next.
  \begin{definition}[$\textsf{star}^{Q}$ operation, \blit{child words}]
  Let $d$ be the length of a binary word, $k$ be a weight parameter, 
  and suppose $y \in \code{B}(d,k)$.
  Let $M = \supp{y}$.
  We define the function $\QstarFull{y}{Q}$ to be the operation which lifts a
  binary word $y$ to a larger alphabet by generating all the words over alphabet
  $[Q]$ on $M$.
  Formally,
  \begin{align*}
    &\QstarFull{y \in \{0,1\}^{d}}{Q} = \{z : z \in [Q]^d, \supp{z} \subseteq \supp{y} \}
  \end{align*}
  Since the alphabet size $Q$ is often fixed when using this operation, when clear
  we will drop the superscript and abuse notation by writing $\Qstar{y}$.
  \blit{Elements of the set $\QstarFull{y}{Q}$ are referred to as
  {\it child words} of $y$.}
  \end{definition}
  For any $y \in \code{B}(d,k)$, there are $Q^{k}$ words generated by
  $\QstarFull{y}{Q}$.
  When $\Qstar{\cdot}$ is applied to all vectors of a set $U$ then we write
  $\Qstar{U} = \cup_{u \in U}{\Qstar{u} }$.
  For example, if $y \in \{0,1\}^d$ and $Q=2$, then $\QstarFull{y}{Q}$ is simply
  all possible binary words of length $d$ whose support is contained in $\supp{y}$.
  }
  \blit{For the projected $F_0$ problem, the code $\mathcal{C} = \mathcal{B}(d,k)$
  is sufficient.
  However, for our subsequent results, we need a randomly
  chosen code whose existence is demonstrated in Lemma \ref{lem: hh-code-existence}.
  The proof
  follows from a Chernoff bound.

  \begin{lemma} \label{lem: hh-code-existence}
  Fix $\epsilon,\gamma \in (0,1)$ and
  let $\code{C} \subseteq \mathcal{B}(d,\epsilon d)$ be such that
  for any two distinct $x,y \in \code{C}$ we have $|x \cap y| \le  (\epsilon^2 + \gamma) d$.
  \blit{
  With probability at least $1 - \exp( -2d\gamma^2)$
  there exists such a code $\code{C}$ with size $2^{O(\gamma^2 d)}$ 
  }
  instantiated by sampling sufficiently many
  words i.i.d. at random from $\mathcal{B}(d,\epsilon d)$.
  \eat{Such a code $\code{C}$ exists and has size $|\code{C}| = 2^{O(\gamma^2 d)}$.}
  \end{lemma}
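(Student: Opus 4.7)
The plan is to apply the probabilistic method directly: sample $M$ codewords i.i.d.\ uniformly from $\mathcal{B}(d,\epsilon d)$ and show that for $M = 2^{c\gamma^2 d}$ with a sufficiently small constant $c > 0$, the resulting multiset is with high probability a valid code with the required intersection property.

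First I would analyze the intersection of a single pair. Fix two independently drawn codewords $x,y$ and condition on $y$. Since $x$ is uniform over $\mathcal{B}(d,\epsilon d)$, the quantity $|\operatorname{supp}(x)\cap\operatorname{supp}(y)|$ is hypergeometric: we are choosing $\epsilon d$ positions without replacement from $d$, of which exactly $\epsilon d$ are ``marked'' (namely those in $\operatorname{supp}(y)$). In particular, $\mathbb{E}[|x\cap y|] = (\epsilon d)^2/d = \epsilon^2 d$. Hoeffding's inequality for the hypergeometric distribution (the usual binomial-style $\exp(-2t^2 n)$ bound holds verbatim under sampling without replacement) then yields
$$\Pr\bigl[|x\cap y| \ge (\epsilon^2+\gamma)d\bigr] \;\le\; \exp(-2\gamma^2 d).$$

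Next I would take a union bound over all $\binom{M}{2}\le M^2$ unordered pairs of sampled codewords. The total failure probability is at most
$$M^2 \exp(-2\gamma^2 d) \;=\; \exp\!\bigl((2c\ln 2 - 2)\gamma^2 d\bigr).$$
Choosing $c$ small enough that $2c\ln 2 < 2$ (any $c < 1/\ln 2$ suffices, with smaller $c$ giving stronger success probability) makes the exponent negative and linear in $\gamma^2 d$, matching the stated $1 - \exp(-2\gamma^2 d)$ bound after absorbing constants into the choice of $c$. The resulting code has size $|\mathcal{C}| = M = 2^{c\gamma^2 d} = 2^{O(\gamma^2 d)}$, as claimed.

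No serious obstacle arises; the only subtlety is balancing the two exponents against each other, both of which are linear in $\gamma^2 d$, which constrains the leading constant on $|\mathcal{C}|$ but is naturally absorbed by the $O(\cdot)$ notation. Duplicate samples are not a concern either, since $|\mathcal{B}(d,\epsilon d)| = \binom{d}{\epsilon d}$ is exponential in $d$ while $M$ is only exponential in $\gamma^2 d$, so collisions occur with negligible probability and can be discarded without affecting the asymptotic code size.
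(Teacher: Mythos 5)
Your proposal is correct and follows essentially the same route as the paper: a single-pair tail bound of $\exp(-2\gamma^2 d)$ for the hypergeometric intersection (the paper phrases this as a Chernoff bound for negatively correlated coordinates, which is the same fact as Hoeffding under sampling without replacement), followed by a union bound over all pairs. If anything, you are more careful than the paper in making the exponent of the union bound explicitly negative by taking $|\mathcal{C}| = 2^{c\gamma^2 d}$ for $c$ strictly below $1/\ln 2$, and in dismissing duplicate samples.
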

  }

\begin{proof}
Let $X$ be the random variable for the number of $1$s in common
between $x$ and $y$ sampled
uniformly at random.
Then the expectation of $X$ is
$  \E[X] = \frac{(\epsilon d)^2}{d} = \epsilon^2 d$
and although the coordinates of $x,y$ are not independent, they are negatively
correlated so we may use a Chernoff bound
(see Section $1.10.2$ of \cite{doerr2020probabilistic} for
self-contained details).
Our aim is to show that the number of $1$s in common between $x$ and
$y$ can be at most $\gamma d$ more than its expectation.
Then, via an additive Chernoff-Hoeffding bound:
\[
\prob(X - \E(X) \ge \gamma d)
\le \exp( -2d\gamma^2).
\]

This is the probability that any two codewords $x$ and $y$ are not too similar,
so
by taking a union bound over the $\redit{\Theta}(|\code{C}|^2)$ pairs of codewords,
the size of the code is $|\code{C}| = \exp(d\gamma^2) = 2^{\gamma^2 d/\ln 2}$.
\end{proof}

\subsection{Overview of Lower Bound Constructions}

Our lower bounds rely upon non-standard reductions to the \INDEX problem
using codes $\code{C}$ defined in Section \ref{sec: code-property}.
These reductions are more involved than is typically found as we need to combine
the combinatorial properties of $\code{C}$ along with the $\Qstar{\cdot}$
operation on Alice's input.
In particular, the interplay between $\code{C}$ and $\Qstar{\cdot}$ must be
understood over the column query $S$ given by Bob, which again relies on
properties of $\code{C}$ used to define the input.

Recall that the typical reduction from \INDEX is as follows:
Alice holds a vector $\mathbf{a} \in \{0,1\}^N$, Bob holds an index
$i \in [N]$ and he is tasked with finding $\mathbf{a}_i$ following one-way
communication from Alice.
The randomized communication complexity of \INDEX is $\Omega(N)$
\cite{kremer1999randomized}.
We adapt this setup for our family of problems, following an approach
that has been used to prove many space lower bounds for streaming
algorithms. 

The general construction of our lower bounds is as follows:
first we choose a binary code $\code{C}$ (usually independently at random)
with certain  properties such as a specific weight and a bounded
number of $1$s in common locations with other words in the code.
In the communication setting,
Alice holds a subset $T \subseteq \code{C}$ while Bob holds a codeword
$y \in \code{C}$ and is tasked with determining whether or not $y \in T$.
Bob can also access the index function (Remark \ref{rem:index-function})
$e(y)$ which simply returns the index or location that $y$ is enumerated in 
$\mathcal{C}$.
\eat{Specifically, $e(u_{j'}) = j'$ for every $u_{j'} \in \mathcal{C}$.}
The corresponding bitstring for the \INDEX problem that Alice holds is
\blit{$\mathbf{a} \in \{0,1\}^{|\code{C}|}$} which has $\mathbf{a}_j = 1$ for every element
$w_j \in T$
(under a
suitable enumeration of $\{0,1\}^d$).
We use the $\Qstar{T}$ operator (defined in Section~\ref{sec: code-property})
to map these strings into an input $A$ for each of the problems
(i.e., a collection of rows of datapoints).
Upon defining the instance, we show that Bob can query a proposed
algorithm for the problem and use the output to determine whether or not Alice
holds $y$.
This enables Bob to return $\mathbf{a}_{e(y)}$, 
which is $1$ if Alice holds $y \in T$ and 
$0$ otherwise.
Hence, determining if $y \in T$ or $y \in \code{C}\setminus T$
solves \INDEX and incurs the lower bound
$\Omega(|\code{C}|)$.
Our constructions of $\code{C}$  establish that
$|\code{C}|$ is exponentially large in $d$.

\section{Lower Bounds for $F_0$}  \label{sec: f0-bounds}

In this section, we focus on the $F_0$ (distinct counting) projected
frequency problem.
The main result in this section is a strong lower bound for the
problem, which is exponential in the domain size $d$.

\eat{

\subsection{Sampling-Based Upper Bounds}

For the $F_0$ function, it is possible to obtain upper bounds by sampling sufficiently many rows.
We first appeal to a theorem of Charikar et al.~\cite{charikar2000towards}
to give an estimator which can approximate the counts on a {\it
    fixed column query} $C$.
Then, we observe that this algorithm can be adapted for a query
$C$ specified at the end.
To simplify the description, we refer to $C$ as if it specifies a single
column.
We introduce the following notation: given data $A$
as input, fix a column(set) $C$ of interest and suppose that in column $C$
the array $A$ takes $D$ distinct values, $\{0,1, \ldots, D-1\}$.
Suppose that $r$ rows are uniformly sampled from $A$.
Let $h_i$ denote the number of values occuring exactly $i$ times in the sample
and define $m = \sum_{i=1}^r h_i$ so that $m \le D \le n$.
Define the estimator $\hat{D}$ as follows:  \eat{ \GEE as follows:}
\begin{equation}
  \textstyle
  \hat{D} = \sqrt{\frac{n}{r}}h_1 + \sum_{j=2}^r h_j =
            \left(\sqrt{\frac{n}{r}} - 1 \right)h_1 + m
\end{equation}

\noindent
The authors adopt the \textit{multiplicative ratio} error of $\hat{D}$ with
respect to $D$ defined as $\text{error}(\hat{D}) =
\max{ ( \hat{D}/D, D/\hat{D} )}$.
Now we can state the theorem of Charikar et al. and the matching lower
bound in our terms.

\begin{theorem}[\cite{charikar2000towards}] \label{thm: upper-bound}
Given an input array $A \in [D]^{n \times d}$ and a fixed column $C$,
if a sample of $r$ rows of the array is kept, then
\eat{the expected ratio error of $\GEE =\hat{D}$}
the expected value of $\text{error}(\hat{D})$
to estimate the number of distinct values on
column $C$ is $O(\sqrt{n/r})$.
\end{theorem}

Although the guarantee from Theorem \ref{thm: upper-bound} is an expected error,
one can show that it holds with high probability.
Note that in order to obtain a \textit{constant factor} approximation
we must take
$r = \Theta(n)$ samples which is a consequence of Theorem \ref{thm: lower-bound}
given below.

\begin{theorem}[\cite{charikar2000towards}] \label{thm: lower-bound}
Consider any estimator $\hat{D}$ which is constructed from sampling at most $r$ rows of
the input array $A \in \R^{n \times d}$.
Then for any $\gamma > e^{-r}$ there exists a choice of the input data such that
with probability at least $\gamma$,
$\operatorname{error}(\hat{D}) \ge \sqrt{\frac{n-r}{2r} \ln \frac{1}{\gamma}}.$

\end{theorem}

Hence, Theorem \ref{thm: upper-bound} is primarily of use to obtain superconstant, i.e.,
low-degree polynomial approximations.
We can apply Theorem \ref{thm: upper-bound} to obtain an expected ratio error
of $O(\sqrt{n/r})$ \blit{for the projected $F_0$ problem.}
\begin{theorem}
Let $A \in [Q]^{n \times d}$ and $C \subseteq [d]$.
If a sample of $r$ rows of the array is kept, then
the expected error 
to estimate the number of distinct values on
projection query $C$ is $O(\sqrt{n/r})$.
\end{theorem}

\begin{proof}
We apply Theorem \ref{thm: upper-bound}, noting that the sampling does
not depend on the choice of $C$.
Hence, upon observing the data $A$, it suffices to maintain a sample
of $r$ entire rows.
Suppose now that $A \in [Q]^{n \times d}$ and the column query $C$ has
$|C| = t > 1$.
When $C$ is received, we can convert all the observed strings
\blit{in the sample}
on $C$ to
canonical integers in $\{ 0,1,\ldots,Q^t-1 \}$.
We now invoke Theorem \ref{thm: upper-bound} to estimate the counts by
considering the $t$-tuple representations from $A$ on $C$ over $[Q]^t$
as a single column whose integer entries come from $[Q^t - 1]$ and taking
$D = Q^t-1$.
This approach achieves expected ratio error $O(\sqrt{n/r})$.
\end{proof}

\subsection{$F_0$ Lower Bound Construction}
}

\label{sec: f0-lower-bound}

\eat{We begin by defining some additional terminology and operators that
will help in the construction of the lower bound.}
We use codes $\code{C} = \mathcal{B}(d,k)$ as defined in Section
\ref{sec: code-property}.

\begin{theorem} \label{thm: lower-bound-deterministic-fixed-query-size}
\eat{Let $Q>k$ be the target alphabet size,
and $k$ be a fixed query size.
There exists a choice of input data $A \in [Q]^{n \times d}$
for the projected $F_0$ problem over \blit{$[Q]^k$} so that if $k < d/2$,
any algorithm
achieving an approximation factor of $|Q|/k$ requires space
$\Omega\big((\frac{d}{k})^k \big)$.
\blit{Let $Q \ge 2 $ be an alphabet size and $k < Q$ be a fixed query size.
Any algorithm which achieves an approximation factor of $Q/k$ requires space:
$\Omega\big((\frac{d}{k})^k \big)$}
\eat{if $k < d/2$,
(ii) $\Omega\left(2^d / \sqrt{d} \big)$ if $k = d/2$.}}

Let $Q \ge 2$ be the target alphabet size
and $k < d/2$ be a fixed query size with $Q > k$.
Any algorithm achieving an approximation factor of $|Q|/k$ for the
projected $F_0$ problem requires space $2^{\Omega(d)}$.
\end{theorem}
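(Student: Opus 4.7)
The plan is to follow the \INDEX-reduction template sketched earlier, instantiated with the dense code $\mathcal{C} = \mathcal{B}(d,k)$. Alice's \INDEX vector $\mathbf{a} \in \{0,1\}^{|\mathcal{C}|}$ encodes a subset $T \subseteq \mathcal{C}$, with $\mathbf{a}_{j} = 1$ iff the $j$-th codeword (under a fixed enumeration) lies in $T$. Bob is given $y^* \in \mathcal{C}$ together with its index $e(y^*)$, and his goal is to output $\mathbf{a}_{e(y^*)}$, i.e.\ to decide whether $y^* \in T$. To obtain the $2^{\Omega(d)}$ bound I would take $k$ to be a constant fraction of $d$ bounded strictly below $d/2$, so that $|\mathcal{C}| = \binom{d}{k} \ge (d/k)^k = 2^{\Omega(d)}$; the standard $\Omega(|\mathcal{C}|)$ one-way randomised lower bound for \INDEX then yields the claim, provided Bob can read off $\mathbf{a}_{e(y^*)}$ from the sketch.

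The reduction itself is to have Alice stream the rows of $\Qstar{T} = \bigcup_{y \in T}\Qstar{y} \subseteq [Q]^d$ into the algorithm to form the array $A$, then forward the sketch state to Bob. Bob issues the column query $C = \supp{y^*}$ of size $k$ and reads the algorithm's estimate $\hat{F}_0$ of $F_0(A,C)$, using it to decide which case he is in.

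The heart of the proof is a YES/NO gap analysis. In the YES case $y^* \in T$, the set $\Qstar{y^*}$ projected onto $C = \supp{y^*}$ realises every string of $[Q]^k$, so $F_0(A,C) = Q^k$ exactly. In the NO case $y^* \notin T$, every $y \in T$ is distinct from $y^*$ and has the same Hamming weight $k$, so $|\supp{y} \cap \supp{y^*}| \le k-1$; hence every restricted child word has at least one zero among the coordinates of $C$, which yields
\[
F_0(A,C) \;\le\; Q^k - (Q-1)^k \;\le\; k\,Q^{k-1},
\]
the last step being Bernoulli's inequality. The crucial feature is that this bound is \emph{independent} of $|T|$, so $T$ can be an arbitrary subset of $\mathcal{C}$ and Alice genuinely transmits $\Omega(|\mathcal{C}|)$ bits through the sketch. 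The YES/NO multiplicative ratio is thus at least $Q^k/(kQ^{k-1}) = Q/k$, matching the stated approximation factor, so a $Q/k$-approximation of $F_0(A,C)$ suffices for Bob to recover $\mathbf{a}_{e(y^*)}$.

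I expect the main delicate point to be pinning down the exact boundary behaviour: a two-sided $Q/k$-approximation only just separates $Q^k$ from $kQ^{k-1}$, so I would either invoke the strict form of Bernoulli's inequality (valid for $k \ge 2$ and giving a gap strictly exceeding $Q/k$), or state the result with a constant-factor slack $c < Q/k$; neither affects the $2^{\Omega(d)}$ conclusion. The rest is routine: checking that $A$ can be streamed in any row order (the construction is order-oblivious), and invoking the $\Omega(|\mathcal{C}|)$ one-way randomised communication complexity of \INDEX to finish the lower bound.
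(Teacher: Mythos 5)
Your proposal is correct and follows essentially the same route as the paper: the same code $\mathcal{B}(d,k)$, the same $\Qstar{T}$ encoding of Alice's set, the same query $\supp{y^*}$, and the same $Q^k$ versus $kQ^{k-1}$ gap driving the reduction to \INDEX (your NO-case count $Q^k-(Q-1)^k$ is a slightly sharper form of the paper's $\binom{k}{k-1}Q^{k-1}$ union bound, and your observation that the bound is independent of $|T|$ is exactly what makes the reduction work). The boundary subtlety you flag about a two-sided $Q/k$-approximation is real and is glossed over in the paper's own write-up; your proposed fixes are adequate.
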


\begin{proof}
Fix the code $\mathcal{C} = \mathcal{B}(d, k)$, recalling that any
$x \in \mathcal{C}$ has Hamming weight $k$, and for
distinct $x,y \in \mathcal{C}$ at most $k - 1$ bits are shared in common.
We will use these facts to obtain the approximation factor.

\medskip
\noindent
\eat{\textit{Separation.}}
Obtain the collection of all child words $\mathcal{C}_Q$ from $\mathcal{C}$ by
using $\QstarFull{\cdot}{Q}$ as defined in Section \ref{sec: code-property}.
We will reduce from the \INDEX problem in communication complexity as follows.
Alice has a set of (binary) codewords
$T \subseteq \mathcal{C}$ and
\blit{initializes the input array $A$ for the algorithm with all strings
from the set $\Qstar{T}$.}
Bob has a vector $y \in \mathcal{C}$ and wants to know if $y \in T$ or not.
Let $S = \supp{y}$ so that $|S| = k$ and Bob queries the $F_0$ algorithm on columns
of $A$ restricted to $S$.
First suppose that $y \in T$.
Then Alice holds $y$ so $\Qstar{y}$ is included in \blit{$A$}
\redit{and there must be at}
least $Q^k$ patterns observed.
Conversely, if $y \notin T$, then Alice does not include $y$ in \blit{$A$}.
However, by the construction of $\mathcal{C}$, $y$ shares at most $(k-1)$
1s with any distinct $y' \in \mathcal{C}$.
Thus, the number of patterns observed on the columns
corresponding to $S$ is at most
  ${k \choose k-1} Q^{k-1} = k Q^{k-1}$.

We observe that if we can distinguish the case of $k Q^{k-1}$ from
$Q^k$, then we could correctly answer the \INDEX instance, i.e.,
if we can achieve an approximation factor of $\Delta$ such that:
\begin{equation} \label{eq: approximation-factor}
\Delta = \frac{Q^k}{k Q^{k-1}} = \frac{Q}{k}.
\end{equation}

Any protocol for \INDEX requires communication proportional to the
length of Alice's input vector $\mathbf{a}$, which translates into a space lower bound
for our problem.
\redit{Alice's set $T \subset \code{C}$ defines an input vector
for the \INDEX problem built
using a characteristic vector over all words in $\code{C}$,
denoted by
$\mathbf{a} \in \{0,1\}^{|\code{C}|}$, as follows.
Under a suitable enumeration of
$\code{C} = \{ w_1, w_2, \ldots, w_{|\code{C}|} \}$,
Alice's vector is encoded via $\mathbf{a}_i = 1$ if and only if Alice
holds the binary word $w_i \in T$.
From the separation shown earlier, Bob can determine if Alice
holds a word in $T$, thus solving \INDEX and incurring the
lower bound.}
Hence, space proportional to $|\mathcal{C}| = {d \choose k}$ is necessary.
We use the standard relation 
${d \choose k} \ge \left({d}/{k}\right)^k $ and choose $k = a d/2$ for 
a constant $a \in [0,1)$ from which we obtain 
$|\code{C}| \ge 2^{a d / 2}$
\eat{to obtain the desired bound.}
\redit{ to achieve the stated approximation guarantee.}
\eat{
and any algorithm for the projected $F_0$ algorithm on such inputs requires space
$\Omega \left( {d^{k+1}} / {k^k}  \right)$ to achieve the stated
approximation guarantee.}
\end{proof}

\noindent
Setting $k = ad/2$ allows us to vary the query size and directly understand how 
this affects the size of the code necessary for the lower bound.
For a query of size $k$, the size of the input to the projected $F_0$ problem 
is a $\left(d/k\right)^k  \times d$ array $A$ of total size ${d^{k+1}} / {k^k}$.
Theorem \ref{thm: lower-bound-deterministic-fixed-query-size} is for $k < d/2$.
When $k = d/2$ we can use the tighter bound for the central binomial term
 on the sum of the binomial
coefficients and obtain the following stronger bounds.
The subsequent results use the same encoding as in Theorem
\ref{thm: lower-bound-deterministic-fixed-query-size}. However, at certain points
of the calculations the parameter setttings are slightly altered to obtain
different guarantees.

\begin{corollary} \label{cor: lower-bound-deterministic}
Let $Q \ge d/2$ be an alphabet size and $d/2$ be the query size.
There exists a choice of input data $A \in [Q]^{n \times d}$ such that 
any algorithm achieving approximation factor $2Q/d$ for the
projected $F_0$ problem on the query requires space $2^{\Omega(d)}$.
\end{corollary}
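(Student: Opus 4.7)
The plan is to specialize the reduction from \INDEX used in Theorem \ref{thm: lower-bound-deterministic-fixed-query-size} to the symmetric setting $k = d/2$, and obtain the improved exponential lower bound by replacing the crude bound $\binom{d}{k} \ge (d/k)^k$ with the central-binomial bound $\binom{d}{d/2} \ge 2^d/\sqrt{2d}$ that is already noted in Section~\ref{sec: code-property}.

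Concretely, I take the code $\mathcal{C} = \mathcal{B}(d, d/2)$ and run the identical protocol: Alice encodes her subset $T \subseteq \mathcal{C}$ as the input array $A$ listing all strings of $\Qstar{T}$, and Bob, holding $y \in \mathcal{C}$ with $|\supp{y}| = d/2$, queries the projected $F_0$ algorithm on the columns $S = \supp{y}$. The separation argument is identical to Theorem \ref{thm: lower-bound-deterministic-fixed-query-size}: if $y \in T$ then $\Qstar{y}$ contributes $Q^{d/2}$ distinct patterns on $S$; if $y \notin T$ then every other $y' \in \mathcal{C}$ overlaps $y$ in at most $d/2 - 1$ coordinates of $S$, so each child word restricted to $S$ has support of size at most $d/2 - 1$, giving at most $\binom{d/2}{d/2-1} Q^{d/2-1} = (d/2)\,Q^{d/2-1}$ distinct patterns in total.

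Thus a multiplicative approximation strictly better than $Q^{d/2} / ((d/2) Q^{d/2-1}) = 2Q/d$ lets Bob distinguish the two cases and hence solve \INDEX on an input of length $|\mathcal{C}| = \binom{d}{d/2}$. Applying the central-binomial bound yields $|\mathcal{C}| \ge 2^d / \sqrt{2d} = 2^{\Omega(d)}$ space, as claimed.

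There is no real obstacle beyond tracking the parameter substitution; all the combinatorial ingredients (support intersection for codewords in $\mathcal{B}(d,d/2)$, the counting of child words under $\Qstar{\cdot}$, and the \INDEX reduction itself) carry over verbatim from the proof of Theorem \ref{thm: lower-bound-deterministic-fixed-query-size}. The only point worth being careful about is the hypothesis $Q \ge d/2$, which is needed so that $Q/k = 2Q/d \ge 1$ and the claimed approximation factor is meaningful (and so that the construction uses a genuine $Q$-ary alphabet).
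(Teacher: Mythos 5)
Your proof is correct and is essentially identical to the paper's: the paper likewise just reruns the reduction of Theorem~\ref{thm: lower-bound-deterministic-fixed-query-size} with $k=d/2$, obtains the approximation factor $\Delta = 2Q/d$ from Equation~\eqref{eq: approximation-factor}, and replaces the crude binomial bound with $\binom{d}{d/2} \ge 2^d/\sqrt{2d} = 2^{\Omega(d)}$. Your added remark on why $Q \ge d/2$ is needed is a reasonable clarification but does not change the argument.
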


\begin{proof}
Repeat the argument of
Theorem \ref{thm: lower-bound-deterministic-fixed-query-size} with $k=d/2$.
The approximation factor from Equation \eqref{eq: approximation-factor} becomes:
  $\Delta = \frac{2 Q}{d}$.
The code size for \INDEX is 
$|\mathcal{C}| \ge 2^d / \sqrt{2d}$.
Note that $|\code{C}|$ is $2^{\Omega(d)}$ as ${\frac12 \log_2(d)}$ can 
always be bounded above by a linear function of $d$.
\eat{and hence the size of the instance}
\redit{The instance is an array whose rows are the $Q^{d/2}$
child words in $\QstarFull{\code{C}}{Q}$.
Hence, the size of the instance}
to the $F_0$ algorithm is:
$\Theta \left( 2^d Q^{d/2} d^{1/2}  \right)$.
\end{proof}

\noindent
Corollary \ref{cor: constant-factor-approx} follows from Corollary
\ref{cor: lower-bound-deterministic} by setting $Q=d$.

\begin{corollary} \label{cor: constant-factor-approx}
A $2$-factor approximation to the projected $F_0$ problem
\blit{on a query of size $d/2$}
needs space
$2^{\Omega(d)}$ with an instance $A$ whose size is
$\redit{\Theta}(2^d d^{\frac{d+1}{2}})$.
\end{corollary}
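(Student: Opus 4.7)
The plan is immediate: invoke Corollary~\ref{cor: lower-bound-deterministic} with the specific parameter choice $Q = d$. First I would check that the prerequisites of that corollary are satisfied: we need $Q \ge d/2$, which holds (with slack) for $Q = d$, and the query size $d/2$ is exactly as in the statement here. The approximation factor reported in Corollary~\ref{cor: lower-bound-deterministic} is $2Q/d$; under the substitution $Q=d$ this collapses to exactly $2$, which matches the factor-$2$ approximation we are trying to rule out.

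For the space lower bound, I would observe that the $2^{\Omega(d)}$ bound established in Corollary~\ref{cor: lower-bound-deterministic} comes entirely from the cardinality $|\code{C}| \ge 2^d / \sqrt{2d}$ of the binary code $\code{C} = \code{B}(d, d/2)$, which has no dependence on the alphabet size $Q$. Consequently the lower bound transfers verbatim once we have confirmed that the reduction in the prior proof remains valid at $Q = d$ (which it does, since the $\QstarFull{\cdot}{Q}$ construction and the separation argument between $Q^k$ and $k Q^{k-1}$ patterns still apply).

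For the instance size, I would substitute $Q = d$ into the expression $\Theta(2^d Q^{d/2} d^{1/2})$ from Corollary~\ref{cor: lower-bound-deterministic}, which accounts for the $\Theta(2^d/\sqrt{d})$ codewords, the $Q^{d/2}$ child words each one spawns, and the row dimension $d$. This substitution yields $\Theta(2^d \cdot d^{d/2} \cdot d^{1/2}) = \Theta(2^d d^{(d+1)/2})$, matching the claim.

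There is no genuine technical obstacle: the corollary is essentially a specialization of Corollary~\ref{cor: lower-bound-deterministic}. The only thing to be careful about is verifying that setting $Q = d$ does not disturb any inequality that the previous proof relied on (e.g.\ $Q > k = d/2$ still holds), and that the arithmetic for the instance size goes through as above.
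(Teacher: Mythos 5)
Your proposal is correct and is exactly the paper's own derivation: the text preceding the corollary states that it ``follows from Corollary~\ref{cor: lower-bound-deterministic} by setting $Q=d$,'' and your substitution of $Q=d$ into the approximation factor $2Q/d$ and the instance size $\Theta(2^d Q^{d/2} d^{1/2})$ reproduces the claimed factor $2$ and size $\Theta(2^d d^{(d+1)/2})$. Your additional check that $Q=d>k=d/2$ preserves the validity of the underlying reduction is a sensible (if brief) verification that the paper leaves implicit.
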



Theorem \ref{thm: lower-bound-deterministic-fixed-query-size}
and its corollaries suffice to
obtain space bounds over all choices of $Q$.
However, $Q$ could potentially grow to be very large, which may be unsatisfying.
As a result, we will argue how the error varies for fixed $Q$.
To do so, we map $Q$ down to a smaller alphabet of size $q$ and use this code to
define the communication problem from which the lower bound will follow.
The cost of this is that the instance is a logarithmic factor larger in the
dimensionality.

\begin{corollary} \label{cor: lower-bound-deterministic-small-alphabet}
\blit{Let $q$ be a target alphabet size such that} $2 \le q \le |Q|$.
Let $\alpha = Q \log_q(Q) \ge 1$ and
$d' = d \log_q(Q)$.
There exists a choice of input data $A \in [q]^{n \times d'}$ for
which any algorithm for the projected $F_0$ problem
\blit{over queries of size $d/2$ that} guarantees
error $\tilde{O}(\alpha / d')$ requires space $2^{\Omega(d)}$.
\end{corollary}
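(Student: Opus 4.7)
The plan is to re-encode the hard instance of Corollary~\ref{cor: lower-bound-deterministic} from alphabet $[Q]$ down to alphabet $[q]$ symbol-by-symbol, leaving both the \INDEX reduction and its gap analysis essentially untouched. First, I would fix an arbitrary bijection $\phi : [Q] \to [q]^{\log_q(Q)}$ satisfying $\phi(0) = \mathbf{0}$, and extend it coordinate-wise to turn any array in $[Q]^{n \times d}$ into an array in $[q]^{n \times d'}$ with $d' = d \log_q(Q)$.

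Starting from Alice's instance $A$ in Corollary~\ref{cor: lower-bound-deterministic} (the rows $\QstarFull{T}{Q}$ for a subset $T \subseteq \code{C} = \mathcal{B}(d, d/2)$), let $A'$ denote its coordinate-wise $\phi$-image. Bob's column query $S$ of size $d/2$ in the original then corresponds naturally to the block of $d'/2$ columns of $A'$ obtained by expanding each selected original column into $\log_q(Q)$ new ones. Since $\phi$ is a bijection, two rows of $A^S$ agree iff the corresponding rows of $(A')^{S'}$ agree, so the projected $F_0$ value is invariant under the encoding. The gap analysis from Corollary~\ref{cor: lower-bound-deterministic} carries over verbatim: $F_0(A,S) = Q^{d/2}$ when $y \in T$ and $F_0(A,S) \le (d/2) \cdot Q^{d/2-1}$ otherwise, giving a multiplicative gap of $2Q/d = 2\alpha/d' = \tilde{O}(\alpha/d')$. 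Any algorithm achieving this approximation factor distinguishes the two cases and thereby solves \INDEX on Alice's length-$|\code{C}|$ vector, forcing space $\Omega(|\code{C}|) = \Omega(2^d/\sqrt{d}) = 2^{\Omega(d)}$.

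The only real subtlety is insisting that $\phi(0) = \mathbf{0}$, so that the $\QstarFull{\cdot}{Q}$ support structure of the Corollary~\ref{cor: lower-bound-deterministic} instance becomes a clean block-of-zeros structure in $A'$ and the symbol-wise encoding neither spuriously collapses distinct rows nor fragments equal ones. Once this is in place, the reduction amounts to a relabelling of the previous hard instance, so no new counting or code construction is required; the logarithmic blow-up from $d$ to $d'$ is precisely what is absorbed into the $\alpha/d'$ parameterization of the approximation guarantee.
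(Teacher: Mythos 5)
Your proposal is correct and takes essentially the same route as the paper: both re-encode each symbol of $[Q]$ as a string in $[q]^{\log_q Q}$, carry over the hard instance and \INDEX reduction of Corollary~\ref{cor: lower-bound-deterministic} unchanged, and observe that the gap $2Q/d$ rewrites as $2\alpha/d'$, giving the $2^{\Omega(d)}$ bound. Your explicit handling of the bijection, the block structure of Bob's expanded query, and the $\phi(0)=\mathbf{0}$ normalization merely makes precise details the paper's proof leaves implicit.
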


\begin{proof}
Fix the binary code $\mathcal{C} = \mathcal{B}(d,d/2)$ and generate all child
words over alphabet $[Q]$ to obtain the approximation factor $\Delta = 2Q/d$ as
in Corollary \ref{cor: constant-factor-approx}.
For every $w \in \mathcal{C}$ there are $Q^{d/2}$ child words so the
child code \redit{$\code{C}_Q}$ now
has size $n=\Theta(2^d Q^{d/2} / \sqrt{d})$ words.
Since $Q$ can be arbitrarily large, we encode it via a mapping to a smaller alphabet \redit{but over a slightly larger
dimension};
specifically, use a function $[Q] \mapsto [q]^{\log_q(Q)}$ which generates
$q$-ary strings for each symbol in $[Q]$.
Hence, all of the stored strings in $\code{C}_Q \subset [Q]^d$ are equivalent to a collection,
$\mathcal{C}_q$ over
$[q]^{d \log_q(Q)}$.
\redit{Although $|\code{C}_Q| = |\code{C}_q|$, words in
$\code{C}_Q$ are length $d$, while the equivalent word
in $\code{C}_q$ has length $d \log_q(Q)$.}
This \redit{collection of words from $C_q$} now defines the instance
$A \in [q]^{n \times d \log_q(Q)}$, each word being a row of $A$.
Taking $\alpha = Q \log_q(Q)$ and $d' = d \log_q(Q)$  results in an
approximation factor of:
\begin{equation}
  \Delta = \frac{2 Q}{d} = \frac{2 \alpha}{d'}.
\end{equation}
Alice's input vector $\mathbf{a}$ is defined by the same code $\code{C}$
and held set $T \subset \code{C}$ as in Theorem \ref{thm: lower-bound-deterministic-fixed-query-size} so we incur the 
same space bound.
Likewise, Bob's test vector $y$ and column query $S$ also remain 
the same as in that theorem.

\eat{
Thus we incur the same $\Omega(2^{d}/\sqrt{d})$ space bound.
Since we have used the same code as in Theorem 
\ref{thm: lower-bound-deterministic-fixed-query-size} $\code{C}$ for the
\INDEX{} problem we obtain the same space bound.

The space lower bound again follows from the communication problem on the 
code $\mathcal{C}$ which encodes Alice's input vector $\mathbf{a}$.
Hence, we incur we incur the $\Omega(2^{d}/\sqrt{d})$ space bound associated
to the \INDEX communication problem from Theorem 
\ref{thm: lower-bound-deterministic-fixed-query-size}
The instance $A$ is taken to be the array whose rows are
words from $\mathcal{C}_q$.
\eat{An algorithm for projected $F_0$ would entail a communication protocol
for \INDEX
To obtain this we require at least the space necessary to generate the
code $\mathcal{C}$ for the \INDEX communication problem, thus incurring the
$\Omega(2^{d}/\sqrt{d})$ space bound.}
}
\end{proof}

\noindent Corollary \ref{cor: lower-bound-deterministic-small-alphabet} says that the
same accuracy guarantee as Corollary \ref{cor: lower-bound-deterministic} can be
given by reducing the arbitrarily large alphabet $[Q]$ to a smaller one over
$[q]$.
However, the price to pay for this is that the size of the instance $A$
increases by a factor of $\log_q(Q)$ in the dimensionality.
These various results are summarized in Table~\ref{tab:comparison}.

\begin{table}[t] \centering
\ra{1.0}
\caption{Comparison of the lower bounds for $F_0$.
Theorem \ref{thm: lower-bound-deterministic-fixed-query-size} uses 
$\code{C} = \mathcal{B}(d,k)$, corollaries use 
$\code{C} = \mathcal{B}(d,d/2)$.}
\label{tab:comparison}
\begin{tabular}[t]{lll}
\toprule
&  Instance $A$ for $F_0$ & Approx. Factor  \\
\midrule
Theorem \ref{thm: lower-bound-deterministic-fixed-query-size}
& $\left(\frac{d}{k}\right)^k  \times d$ over $[Q]$
& $\frac{Q}{k}$ \\

Corollary \ref{cor: lower-bound-deterministic}
& $2^d Q^{d/2} \times d$ over $[Q]$
& $\frac{2Q}{d}$ \\

Corollary \ref{cor: constant-factor-approx}
& $2^d d^{d/2} \times d$ over $[d]$
& $2$ \\

Corollary \ref{cor: lower-bound-deterministic-small-alphabet}
& $2^d Q^{d/2} \times d \log_q Q$ over $[q]$
& $\frac{2Q}{d}$ \\
\bottomrule
\end{tabular}
\end{table}%





\section{$\ell_p$-Frequency Based Problems} \label{sec: l_p-problems}

In this section, we extend the techniques from the previous section to
understand the complexity of projected frequency estimation problems
related to the $\ell_p$ norms and $F_p$ frequency moments
(defined in Section~\ref{sec: problem-defs}).
A number of our results are lower bounds, but we begin with a
simple sampling-based upper bound to set the stage.

\subsection{$\ell_p$ Frequency Estimation}
\label{sec:hhupper}

\blit{
We first focus on the projected frequency estimation problem showing that a
simple algorithm keeping a uniform sample of the rows works for $p < 1$.
The algorithm \uniform{A, C, t, b} first builds a uniform sample of $t$ rows
(sampled with replacement at rate $\alpha=t/n$) from $A$
and
evaluates the absolute frequency of string $b$ on the sample after projection
onto $C$.
Let $g$ be the absolute frequency of $b$ on the subsample.
To estimate the true frequency of $b$ on the
entire dataset from the subsample, we return an
appropriately scaled estimator $\hat{f}_{e(b)} = g/\alpha$ which meets the required bounds
given in Theorem \ref{thm: frequency-est-uniform-sample}, 
recalling that $e(b)$ is the index location associated with the string $b$.
The proof follows by a standard Chernoff bound argument and is given in 
Appendix \ref{sec:uniform-sampling-proof}.

\begin{theorem} \label{thm: frequency-est-uniform-sample}
Let $A \in \{0,1\}^{n \times d}$ be the input data and let $C \subseteq [d]$ be a
given column query.
For a given string $b \in \{0,1\}^{C}$, the absolute frequency of $b$, $f_{e(b)}$,
can be estimated up to $\eps \|f\|_1$ additive error using a uniform sample of size
$O(\eps^{-2} \log(1/\delta))$ with probability at least $1-\delta$.
\end{theorem}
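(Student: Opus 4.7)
The plan is to set up this estimator as a sum of bounded i.i.d. indicators and apply a standard concentration inequality (Hoeffding/Chernoff) to obtain the stated sample complexity. The key observation that makes the bound independent of both $n$ and $d$ is that $\|f(A,C)\|_1 = n$ for any column query $C$, so the additive $\eps \|f\|_1$ slack corresponds to $\eps n$ absolute error on the frequency, which exactly matches what a sample mean achieves at this size.

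More concretely, the first step is to fix $t = c \eps^{-2} \log(1/\delta)$ for a constant $c$ to be determined, set $\alpha = t/n$, and let $r_1, \ldots, r_t$ be i.i.d.\ uniform samples from the rows of $A$. For each $j \in [t]$ define the indicator $X_j = \mathbf{1}[A_{r_j}^C = b]$, so that $X_j \in \{0,1\}$ and $\E[X_j] = f_{e(b)}/n$. Writing $g = \sum_j X_j$ and $\hat f_{e(b)} = g/\alpha = (n/t)\, g$, linearity of expectation gives $\E[\hat f_{e(b)}] = f_{e(b)}$, so the estimator is unbiased.

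The second step is concentration. Since the $X_j$ are i.i.d.\ and take values in $[0,1]$, Hoeffding's inequality yields
\[
\Pr\!\left[\,\bigl|g - \E[g]\bigr| \ge \eps t\,\right] \;\le\; 2\exp(-2\eps^2 t).
\]
Rescaling by $n/t$ and using $\|f\|_1 = n$ turns this into
\[
\Pr\!\left[\,\bigl|\hat f_{e(b)} - f_{e(b)}\bigr| \ge \eps \|f\|_1\,\right] \;\le\; 2\exp(-2\eps^2 t),
\]
and choosing $t \ge \tfrac{1}{2}\eps^{-2}\ln(2/\delta) = O(\eps^{-2}\log(1/\delta))$ forces the right-hand side below $\delta$, which yields the theorem.

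I do not anticipate any real obstacle: the argument is textbook once one notices that $\|f(A,C)\|_1 = n$ is invariant under the choice of $C$ (every row contributes exactly one pattern after projection), so the sample-size bound has no dependence on $|C|$ or the alphabet. The only minor subtlety is that the algorithm as described draws the sample before $C$ is revealed and only projects onto $C$ afterwards, but this is immaterial to the analysis because the sampled rows are themselves uniform in $[n]$ regardless of $C$, and the indicators $X_j$ are defined only after $C$ and $b$ are known. If one prefers sampling without replacement, a Serfling-type bound gives the same rate, so the same conclusion holds.
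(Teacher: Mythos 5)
Your proof is correct and follows essentially the same route as the paper's: uniform sampling with replacement, indicator variables for the event $A_{r_j}^C = b$, an additive Chernoff--Hoeffding bound giving $2\exp(-\Theta(\eps^2 t))$, and the observation that $\|f\|_1 = n$ to convert the $\eps n$ absolute error into the stated $\eps\|f\|_1$ guarantee. Your closing remark that the rows can be sampled before $C$ is revealed matches the paper's own discussion immediately following the theorem.
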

\noindent The same algorithm can be used to obtain bounds for all $0 < p < 1$.
By noting that $\|f \|_1 \le \| f \|_p$ for $0 < p < 1$ we can obtain the following
corollary.

\eat{
\begin{proof}
Let $T = \{i \in [n] : A_i^C = b \}$ be the set of indices on which the
projection onto query set $C$ is equal to the given pattern $b$.
Sample $t$ rows of $A$ uniformly with replacement at a rate $q = t/n$.
Let the (multi)-subset of rows obtained be denoted by $B$ and the matrix formed
from the rows of $B$ be $\hat{A}$.
\eat{Since the rows are sampled iid, the expected size of $B$ is $t$.}
For each $i \in B$, define the indicator random variable $X_i$ which is
$1$ if and only if the randomly sampled index $i$ \blit{satisfies
$A_i^C = b$, which occurs with probability $|T|/n$.}
\eat{is included in the random sample, and is in the
set $T$.
That is, $t_i = 1$ if and only if $i \in \hat{T} := T \cap B$.}
Next, we define $\hat{T} = T \cap B$ so that
$|\hat{T}| = \sum_{i = 1}^t X_i$ and the estimator
$Z = \frac{n}{t}|\hat{T}|$ has $\E(Z) = |T|$.
Finally, apply an additive form of the Chernoff bound:
\begin{align*}
\prob \left( |Z - \E(Z)| \ge \eps n \right) &=
\prob \left( \left| \frac{n}{t}|\hat{T}| - |T| \right| \ge \eps n  \right) \\
      &= \prob \left( \left| |\hat{T}| - \frac{t}{n}|T| \right| \ge \eps t \right) \\
      &\le 2 \exp \left(-\eps^2 t \right).
\end{align*}
Setting $\delta = 2 \exp \left(-\eps^2 t \right)$ allows us to choose
$t = O(\eps^{-2} \log(1/\delta))$, which is independent of $n$ and $d$.
\eat{Treating $\eps$ and $\delta$ as constants, this space bound is also constant.}
The final bound comes from observing that $\|f\|_1 = n, f_b = |T| $ and
$\hat{f}_b = Z$.
\end{proof}
}

\begin{corollary} \label{cor: lp-frequency-est-uniform-sample}
Let $A,b, C$ be as in Theorem \ref{thm: frequency-est-uniform-sample}.
Let $0 < p < 1$.
Then uniformly sampling $O(\eps^{-2} \log(1/\delta))$ rows achieves 
$\left| \hat{f}_{e(b)} - f_{e(b)} \right| \le \eps \|f\|_p$ with probability at least
$1-\delta$.
\end{corollary}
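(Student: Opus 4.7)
The plan is to invoke Theorem \ref{thm: frequency-est-uniform-sample} essentially verbatim: the same uniform sampling algorithm with $t = O(\eps^{-2} \log(1/\delta))$ rows produces an estimator $\hat{f}_{e(b)}$ such that, with probability at least $1-\delta$, $|\hat{f}_{e(b)} - f_{e(b)}| \le \eps \|f\|_1$. All that remains is to upgrade the right-hand side from $\eps \|f\|_1$ to $\eps \|f\|_p$, and this is exactly the norm comparison flagged in the paragraph preceding the statement.

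For this I would use the elementary subadditivity of $x \mapsto x^p$ on $[0,\infty)$ for $0 < p \le 1$: for $a,b \ge 0$, $(a+b)^p \le a^p + b^p$, which iterates to $\bigl(\sum_i f_i\bigr)^p \le \sum_i f_i^p$ for any non-negative vector $f$. Taking $p$-th roots yields $\|f\|_1 \le \|f\|_p$. Since the projected frequency vector $f = f(A,C)$ has non-negative integer entries, this inequality applies, so $\eps \|f\|_1 \le \eps \|f\|_p$ and the theorem's bound immediately implies the claimed one.

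The main (and really only) subtlety is that the direction of the inequality between $\|\cdot\|_1$ and $\|\cdot\|_p$ flips at $p = 1$: for $p > 1$ the standard monotonicity of $\ell_p$ norms on the simplex goes the other way, $\|f\|_p \le \|f\|_1$, and the reduction would instead \emph{weaken} the guarantee. So the argument is confined to $0 < p < 1$, which is exactly the hypothesis; no additional work on the sampling side is required, and the sample size $O(\eps^{-2}\log(1/\delta))$ is inherited unchanged from Theorem \ref{thm: frequency-est-uniform-sample}.
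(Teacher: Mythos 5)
Your proposal is correct and is exactly the paper's argument: the corollary is obtained by applying Theorem \ref{thm: frequency-est-uniform-sample} unchanged and then invoking $\|f\|_1 \le \|f\|_p$ for $0 < p < 1$, which you justify (slightly more explicitly than the paper) via subadditivity of $x \mapsto x^p$. Your remark about the inequality reversing for $p>1$ is a correct and worthwhile sanity check, but no further work is needed.
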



Both Theorem \ref{thm: frequency-est-uniform-sample} and Corollary
\ref{cor: lp-frequency-est-uniform-sample} are stated as if $C$ is given.
However, since the sampling did not rely on $C$ in any way, we can
sample complete rows of the input uniformly prior to receiving the
query $C$, which is revealed after observing the data.
The uniform sampling approach also allows us to identify the
$\ell_p$ heavy hitters in \blit{small space:}
for each item included in the sample (when projected onto column set $C$),
we use the sample to estimate its frequency, and declare those with
high enough estimated frequency to be the heavy hitters.
By contrast, for $p > 1$ we are able to obtain a $2^{\Omega(d)}$ space
lower bound, given in the next section.
}

\subsection{$\ell_p$ Heavy Hitters Lower Bound}

Recall that the objective of (projected) $\ell_p$ heavy hitters is to
find all those rows in $A^C$ whose frequency is at least some fraction
of the $\ell_p$ norm of the frequency distribution of this
projection.
\eat{For the lower bound we need a specific type of binary code
We show the existence of such a code $\code{C}$ now.}
\blit{For the lower bound we need a randomly sampled code as defined in Lemma
\ref{lem: hh-code-existence}.}
The lower bound argument follows a similar outline to the
bound for $F_0$, \blit{although now Bob's query is on the complement of the support of his
test vector $y$ (i.e., $S = [d]\setminus \supp{y}$) rather than $\supp{y}$.}
\redit{Akin to Theorem~\ref{thm: lower-bound-deterministic-fixed-query-size}, we will create a reduction
from the \INDEX problem in communication complexity, and use its
communication lower bound to argue a space lower bound for projected
$\ell_p$ heavy hitters.
The proof will generate an instance of $\ell_p$ heavy hitters based on
encoding a collection of codewords, and
consider in particular the status of the string corresponding to all
zeroes.
We will consider two cases:
when Bob's query string is represented in Alice's set of codewords, then
the all zeros string will be a heavy hitter (for a subset of columns
determined by the query); and when Bob's string is
not in the set, then the all zeros string will not be a heavy hitter.
We begin by setting up the encoding of the input to the \INDEX
instance. }

\begin{theorem} \label{thm: heavy-hitters-lower-bound}
Let $\phi \in (0,1)$ be a parameter and fix $p > 1$.
Any algorithm which can obtain a constant factor approximation 
to the projected $\ell_p$-heavy hitters problem
requires space $2^{\Omega(d)}$.
\end{theorem}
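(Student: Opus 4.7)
The plan is to reduce from \INDEX, with Alice encoding her bitstring as a subset $T \subseteq \code{C}$ of the randomly-sampled code of Lemma~\ref{lem: hh-code-existence}, and Bob holding a codeword $y \in \code{C}$ whose presence in $T$ he needs to determine. Fix constants $\epsilon, \gamma \in (0,1)$ with $\gamma$ much smaller than $\epsilon(1/p - \epsilon)$; then $|\code{C}| = 2^{\Omega(d)}$ and any two distinct codewords in $\code{C}$ intersect in at most $(\epsilon^2 + \gamma)d$ positions. Alice feeds the algorithm the input $A$ whose rows are all child words $\QstarFull{w}{Q}$ for $w \in T$, giving $|T|\cdot Q^{\epsilon d}$ rows in total. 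Bob then queries on the complement of his codeword's support, $S = [d] \setminus \supp{y}$, and asks the algorithm whether the all-zeros pattern on $S$ is reported as an $\ell_p$-heavy hitter.

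The driving observation is that projecting onto $S$ sends a child of codeword $w$ to the all-zeros pattern exactly when the child's support lies in $\supp{w} \cap \supp{y}$. For $w = y$ this happens for all $Q^{\epsilon d}$ children, while for $w \neq y$ the code property caps the count by $Q^{|\supp{w}\cap\supp{y}|} \leq Q^{(\epsilon^2+\gamma)d}$. Therefore the frequency $f_0$ of the all-zeros pattern satisfies $f_0 \geq Q^{\epsilon d}$ in the YES case and $f_0 \leq |T|\cdot Q^{(\epsilon^2+\gamma)d}$ in the NO case. Since $|T| \leq |\code{C}| = 2^{O(\gamma^2 d)}$, choosing $\gamma$ small makes the NO-case bound a $Q^{-\Omega(d)}$ fraction of the YES-case bound.

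To upgrade this frequency gap into a heavy-hitter gap for $p>1$, I need to control $\|f\|_p$ in both regimes. In the YES case, every non-zero pattern has frequency at most $|T|\cdot Q^{(\epsilon^2+\gamma)d}$, so the bound $\sum_{z\ne 0} f_z^p \leq \bigl(\max_{z\ne 0} f_z\bigr)^{p-1}\cdot \|f\|_1$ gives $\|f\|_p = O(Q^{\epsilon d})$ once parameters are chosen to make the cross-terms subdominant; hence $f_0/\|f\|_p = \Omega(1)$ and all-zeros is a $\phi$-heavy hitter for some constant $\phi > 0$. In the NO case, the power-mean inequality $\|f\|_p \geq \|f\|_1/N^{1-1/p}$, applied with the trivial bound $N \leq |T|\cdot Q^{\epsilon d}$ on the number of distinct projected patterns, yields $\|f\|_p \geq (|T|\cdot Q^{\epsilon d})^{1/p}$; combining with the NO-case bound on $f_0$ produces $f_0/\|f\|_p \leq 2^{O(\gamma^2 d)}\cdot Q^{(\epsilon^2+\gamma - \epsilon/p)d}$, which is $Q^{-\Omega(d)}$ as soon as $\gamma \ll \epsilon(1/p - \epsilon)$. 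Bob then distinguishes YES from NO by checking whether all-zeros appears in the reported heavy-hitter list, and any constant-factor approximation is tolerated because the gap in $f_0/\|f\|_p$ is exponential in $d$.

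The main obstacle is the joint parameter tuning: $\epsilon$ and $\gamma$ must be simultaneously large enough to keep $|\code{C}| = 2^{\Omega(d)}$ and small enough that, even when $|T|$ grows to $|\code{C}|$, the cross-codeword contribution $|T|\cdot Q^{(\epsilon^2+\gamma)d}$ stays well below $Q^{\epsilon d}$, and that the $\ell_p$-norm stays concentrated on the all-zeros spike. The latter is where $p>1$ enters essentially: the power-mean-type estimates above hinge on $p>1$, consistent with the positive result for $\ell_p$ frequency estimation in Section~\ref{sec:hhupper} when $p<1$, where no such spike can be exploited.
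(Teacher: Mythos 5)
Your reduction is correct and shares the paper's skeleton: the same random code from Lemma~\ref{lem: hh-code-existence}, the same query $S = [d]\setminus\supp{y}$, and the same target pattern $\zerovec_S$, whose frequency is amplified to $Q^{\epsilon d}$ by the children of $y$ and contaminated by at most $Q^{(\epsilon^2+\gamma)d}$ children per other codeword. The genuine divergence is in how $\|f\|_p$ is controlled in the NO case. The paper pads the instance with $2^{\epsilon d}$ copies of the all-ones row $\mat{1}_d$, which forces $F_p \ge 2^{\epsilon p d}$ in both cases and makes the NO case immediate (at the cost of having to check in the YES case that the padding does not swamp the spike). You drop the padding and instead invoke H\"older, $\|f\|_p \ge \|f\|_1/N^{1-1/p} \ge \|f\|_1^{1/p} \ge Q^{\epsilon d/p}$, trading the gadget for the parameter constraint $\epsilon^2+\gamma < \epsilon/p$ (hence $\epsilon < 1/p$) in place of the paper's requirement that $\epsilon$ be sufficiently smaller than $(p-1)/p$. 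Your YES-case bound $\sum_{z\ne 0} f_z^p \le (\max_{z\ne 0} f_z)^{p-1}\|f\|_1$ is the same concentration-on-the-spike computation the paper carries out term by term, and both constructions fail at $p=1$ for the same reason: without the $p$-th power, $f_{e(\zerovec_S)}/\|f\|_1 \approx 1/|T|$ is not a constant. This is consistent with the upper bound for $p<1$ in Section~\ref{sec:hhupper}. One small imprecision: you count $|T|\cdot Q^{\epsilon d}$ rows, but $\Qstar{T}$ is defined as a set union, so child words shared between codewords appear once; this only shrinks $\|f\|_1$ and $N$, which strengthens your H\"older bound, so nothing breaks. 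Your version is a legitimate and slightly leaner proof of the same theorem.
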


\begin{proof}
Fix $\epsilon > 0$.
Let $\code{C} \subset \mathcal{B}(d,\epsilon d)$ be a code whose words have weight
$\epsilon d$ and any two distinct words $x,y$
have at most $(\epsilon^2 + \gamma)d$ ones in common.
By Lemma \ref{lem: hh-code-existence} such a $\code{C}$ exists and
$|\code{C}| = 2^{\Omega_{\gamma}(d)}$.

Suppose Alice holds a \eat{random} subset $T \subset \code{C}$.
Let $\mathbf{a} \in \{0,1\}^{|\code{C}|}$ be the characteristic vector over all length-$d$
binary strings for which $\mathbf{a}_{e(u)} = 1$ if and only if Alice holds $u \in T$.
Bob holds $y \in \code{C}$ and wants to determine if Alice holds $y \in T$.
Ascertaining whether or not Alice holds $y$ would be sufficient for Bob to
solve \INDEX  and incur the $\Omega(|\code{C}|)$ lower bound.

The input array, $A$, for the $\ell_p$-heavy hitters problem is
constructed as follows.
\begin{enumerate}
  \item{Alice populates $A$ with $2^{\epsilon d}$ copies of
  the length-$d$ all ones vector, $\mat{1}_d$}
  \item{Next, Alice takes $Q=2$ and inserts into $A$ the
  collection $\QstarFull{T}{Q}$, which is the expansion
  of her input strings to all child-words in binary.
  That is, for every $s \in T$, Alice computes all binary strings $x$ of
  length $d$ with $\supp{x} \subseteq \supp{s}$
  and includes these in $A$.
  }
\end{enumerate}
\eat{
 $A$ which is
initialized with $2^{\epsilon d}$ copies of the length-$d$ all ones vector, $\mat{1}_d$.
Then Alice takes $Q=2$ and computes $\QstarFull{T}{Q}$ which is the expansion
of her input strings to all sub-words in binary.
That is, for every $s \in T$, Alice computes all binary strings $x$ of
length $d$ with $\supp{x} \subseteq \supp{s}$
and includes these in $A$.
}
Let $S = [d] \setminus \supp{y}$, so that $|S| = d - \epsilon d = (1-\epsilon)d$.
Without loss of generality we may assume $S = \{ 1,2,\ldots,(1-\epsilon)d \}$ and we
denote the $(1-\epsilon)d$ length vector which is identically $0$ on $S$ by $\zerovec_S$.
Suppose there is an algorithm $\code{A}$ which approximates the
$\ell_p$-heavy hitters problem on a given column query up to a constant
approximation factor.
Bob queries $\code{A}$ for the heavy hitters in the table $A$ under the column query
given by the set $S$, and then uses this information to answer whether or not $y \in T$.

\medskip
\noindent
\textbf{Case 1: $y \in T$.}
If $y \in T$, then we claim that $\zerovec_S$ is a $\phi$-$\ell_p$ heavy hitter
for some constant $\phi$, i.e., $f_{e(\zerovec_S)} \ge \phi \|f\|_p$.
We will manipulate the equivalent condition $f_{e(\zerovec_S)}^p \ge \phi^p F_p$.
Since $y \in T$, the set $\Qstar{y}$ is included in the table $A$ as
Alice inserted $\Qstar{s}$ for every $s$ that she holds.
Consider any child word of $y$, that is, a $w \in \Qstar{y}$.
Since $y$ is supported only on $[d] \setminus S$ and $\supp{w} \subseteq \supp{y}$, every
$w_i = 0$ for $i \in S$.
So $\zerovec_S$ is observed once for every $w \in \Qstar{y}$ and
there are $|\Qstar{y}| = 2^{\epsilon d}$ such $w$.
Hence, $\zerovec_S$ occurs at least $2^{\epsilon d}$ times.

Now that we have a lower bound on the frequency of $\zerovec_S$, it remains to
upper bound the $F_p$ value when $y \in \redit{T}$ so that we are assured
$\zerovec_S$ will be a heavy hitter in this instance.
The quantity we seek is the $F_p$ value of all vectors in $A^S$,
\redit{written $F_p(A,S)$; 
 which we decompose into the contribution
from $\zerovec_S$ present due to $y$ being in $T$, and two
special cases from the block of $2^{\eps d}$ all-ones rows and
`extra' copies of $\zerovec_S$ which are contributed by vectors $y' \ne y$}.
We claim that this $\redit{F_p(A,S)}$ value is at most
$|\code{C}|^{1+p} 2^{\epsilon d + (\epsilon^2 + \gamma)dp} +
3 \cdot 2^{\epsilon pd}$.

First, let $y' \in \code{C}$ with $y' \ne y$ and consider prefixes $z$
supported on $S$ which can
be generated by possible child words from $\Qstar{y'}$.
Since our code requires that $|y' \cap y| \leq (\epsilon^2 + \gamma)d$,
$y'$ can have at most $(\epsilon^2 + \gamma)d$ $1$s located in $\bar{S} =
[d] \setminus S$, and hence must have at least $(\epsilon - \epsilon^2 - \gamma)d$
$1$s located in $S$.
Since $|\Qstar{y'}| = 2^{\epsilon d}$, the number of copies of $z$ inserted
is at most $2^{\epsilon d - (\epsilon d - \epsilon^2 d - \gamma d)} = 2^{\epsilon^2d + \gamma d}$.
This occurs for every $y' \in \code{C}$ so the total number of occurences of $z$
is at most $|\code{C}|2^{(\epsilon^2 + \gamma)d}$.
The contribution to $F_p$ for this scenario is then
$|\code{C}|^p 2^{(\epsilon^2 + \gamma)dp}$.
Observe that each codeword $y'$ generates at most $2^{\epsilon d}$
vectors under the $\Qstar{y'}$ operator, so we have an upper bound of
$|\code{C}| 2^{\epsilon d}$ such vectors generated, with a total
contribution of $|\code{C}|^{1+p}2^{(\epsilon^2p + \epsilon +
  \gamma p)d}$.

Next, we focus on 
the two special vectors to count which have a high contribution
to the $F_p$ value.
Recall that Alice specifically included $\mat{1}_d$ into $A$
$2^{\epsilon d}$ times so the $p$-th powered frequency is exactly $2^{\epsilon pd}$ for this term.
From the above argument, $\zerovec_S$ also has frequency
$2^{\epsilon d}$ from $\Qstar{y}$.
But $\zerovec_S$ is also created at most $2^{(\epsilon^2 + \gamma)d}$
times from each $y'\neq y$ in $T$, giving an additional count of at
most
$|\code{C}|2^{(\epsilon^2 + \gamma)d}$.
Based on our choice of $\epsilon$ and $\gamma$, we can ensure that
this is asymptotically smaller than $2^{\epsilon d}$, and so the total
contribution from these two special vectors is at most $3 \cdot
2^{\epsilon d}$.
So in
total we achieve that $F_p$ is at most
$|\code{C}|^{1+p} 2^{\epsilon d + (\epsilon^2 + \gamma)dp} +
3 \cdot 2^{\epsilon pd}$, as claimed.

Then $\zerovec_S$ meets the definition to be a $\phi$-$\ell_p$ heavy hitter provided
\[ 2^{\epsilon pd} > \phi^p (|\code{C}|^{1+p}2^{\epsilon d + (\epsilon^2+\gamma) pd} +
3\cdot 2^{\epsilon pd}).\]

Assuming $p>1$, and choosing
$\epsilon$ sufficiently smaller than $(p-1)/p$ and
$\gamma$ sufficiently small, we have that
\[ |\code{C}|^{1+p}2^{\epsilon d + (\epsilon^2 + \gamma) pd}
\le 2^{O(\gamma^2 d(1+p)) + \epsilon d + \epsilon (p-1)d + \gamma pd}
\le 2^{\epsilon p d}. \]

Hence, we require
$2^{\epsilon pd} > \phi^p O(2^{\epsilon pd})$, i.e.,
$2^{\epsilon d} > \phi O(2^{\epsilon d})$,
which is satisfied for a suitably small but constant $\phi$.

\medskip
\noindent
\textbf{Case 2: $y \notin T$.}
On the other hand, suppose that $y \notin T$. Then the claim is that $\zerovec_S$
is not a $\phi$-$\ell_p$-heavy hitter.
Now the vector $\zerovec_S$ does not occur with a high frequency because
$\Qstar{y}$ is not included in $A$.
However, certain child words in $\Qstar{T}$ could also generate $\zerovec_S$ when
projected onto $S$ and this is the contribution we need to upper bound.
Again, any codeword $s \in T$ has at least $(\epsilon - \epsilon^2-
\gamma)d$ $1$s present on $S$.
So for a particular $s \in T$, $\zerovec_S$ can occur
$2^{\epsilon^2 d + \gamma d}$ times.
Taken over all $y' \in \code{C}$ for which Alice includes in $A$, the
frequency of $\zerovec_S$ in this case is at most $|\code{C}|2^{\epsilon^2 d + \gamma d}$.
Taking $\eps < 1/3, \gamma < \eps/3$ and using $|\code{C}| = 2^{\gamma^2 d / \ln2}$ 
(Lemma \ref{lem: hh-code-existence}) we have $f_{e(\zerovec_S)} \le 2^{0.72 \eps d}$.
Meanwhile, there are $2^{\epsilon d}$ copies of the string $\mat{1}_d$ inserted into
$A$ meaning that $F_p(A,S) \ge 2^{\epsilon pd}$ and hence $F_p^{1/p}$ is strictly
greater than $f_{e(\zerovec_S)}$.
Hence, $\zerovec_S$ is \emph{not} a $\phi$-$\ell_p$ heavy hitter provided that 
$f_{e(\zerovec_S)}/F_p^{1/p} = 2^{-0.28 \eps d}$ is strictly less than $\phi = 1/4$, 
this is satisfied for suitable $\eps$ and $d$.

\eat{
which is strictly less than $2^{\epsilon d}$  provided $\epsilon < \frac12$ and
$\gamma < \frac{1}{4(1+O(1))}$.
Meanwhile, there are $2^{\epsilon d}$ copies of the string $\mat{1}_d$ inserted into
$A$ meaning that the 
\redit{$F_p(A,S) \ge 2^{\epsilon pd}$.}
Since $f_{e(\zerovec_S)} < \phi 2^{\epsilon d}$ the string $\zerovec_S$ cannot be a
$\phi$-$\ell_p$ heavy hitter.
}
\medskip
\noindent
\textbf{Concluding the proof.}
Bob can use his test vector $y$ and a query $S$ with a constant
factor approximation algorithm $\code{A}$ for the $\ell_p$-heavy hitters problem and distinguish
between the two cases of Alice holding $y$ or not based on whether
$\zerovec_S$ is reported.
As a result, Bob can determine if $y \in T$ and consequently solve \INDEX, thus incurring the $\Omega(|\code{C}|) = 2^{\Omega(d)}$ lower bound.
\end{proof}
\noindent
The instance $A$ is initialized with $2^{\eps d}$ rows of the vector
$\mat{1}_d$ and the child words $\QstarFull{T}{Q}$.
For any $t \in \QstarFull{T}{Q}, |\QstarFull{t}{Q}| = 2^{\eps d}$ so
the size of the instance $A$ is $(|T| + 1)2^{\eps d} \times d$.

\subsection{$F_p$ Estimation} \label{sec: Fp-estimation}

The space complexity of approximating the frequency moments $F_p$ has been widely
studied since the pioneering work of Alon, Matias and
Szegedy~\cite{Alon:Matias:Szegedy:99}.
Here, we investigate their complexity under projection.
For $p = 1$, the frequency is always  \redit{the number $n$ of rows 
in the original instance} irrespective of the column set $C$,
so only one word of space is required.
We therefore devote attention to $p \ne 1$.

The reduction to \INDEX for Theorem
\ref{thm: F_p-estimation-lower-bound} follows a similar outline as Theorem
\ref{thm: heavy-hitters-lower-bound} for $p > 1$.
For $p < 1$, we encode the problem slightly differently,
\blit{closer to that in
Theorem~\ref{thm: lower-bound-deterministic-fixed-query-size}.}
\eat{noting
that the column query for Bob in Theorem \ref{thm: heavy-hitters-lower-bound}
is $S = [d] \setminus \supp{y}$ whereas the subsequent result
will use $S = \supp{y}$ for Bob's test vector $y$.}
Again, the reduction to \INDEX relies on Bob determining whether or not
Alice holds $y$, which for $F_p$ estimation amounts to Bob
evaluating $F_p(A,S)$ and comparing to a threshold value.

\begin{theorem} \label{thm: F_p-estimation-lower-bound}
Fix a real number $p > 0$  with $p \ne 1$.
A constant factor approximation to the projected $F_p$ estimation problem
requires space $2^{\Omega(d)}$.
\end{theorem}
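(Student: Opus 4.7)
The plan is to split the proof into two cases according to the value of $p$, using distinct encodings as hinted in the lead-in. For $p > 1$ I would re-use essentially verbatim the construction of Theorem~\ref{thm: heavy-hitters-lower-bound}: Alice chooses $T \subseteq \code{C}$ (the random code from Lemma~\ref{lem: hh-code-existence}), inserts $2^{\epsilon d}$ copies of $\mat{1}_d$ together with all words of $\QstarFull{T}{2}$ into $A$, and Bob queries $S = [d] \setminus \supp{y}$. The case analysis in Theorem~\ref{thm: heavy-hitters-lower-bound} already bounds the frequency of $\zerovec_S$ and the $F_p$ mass contributed by every other pattern: when $y \in T$, the frequency of $\zerovec_S$ is at least $2^{\epsilon d}$ and contributes $\ge 2^{\epsilon p d}$ to $F_p$; when $y \notin T$, this same frequency is at most $|\code{C}|\,2^{(\epsilon^2+\gamma)d}$, which with the parameter choices of that proof is exponentially smaller. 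The remaining mass (from $\mat{1}_d$ and from $y'$-child-words with $y' \neq y$) is identical in both cases up to lower-order terms. This yields a constant-factor gap in $F_p(A,S)$, so any constant-factor approximation to $F_p$ lets Bob decide if $y \in T$ and inherit the $\Omega(|\code{C}|)=2^{\Omega(d)}$ communication lower bound. If the gap factor needs to exceed a particular target constant, I would amplify by scaling the number of $\mat{1}_d$ copies (or by replicating the instance), which changes $F_p$ by a known factor while preserving the reduction.

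For $p < 1$ I would switch to the deterministic code $\code{C} = \mathcal{B}(d, k)$ from Theorem~\ref{thm: lower-bound-deterministic-fixed-query-size}, with Alice inserting $\QstarFull{T}{Q}$ into $A$ and Bob now querying on $S = \supp{y}$ (as in that proof rather than its complement). When $y \in T$, the set $\Qstar{y}$ alone realizes every one of the $Q^k$ strings in $[Q]^{S}$ at least once, so $F_p(A,S) \ge Q^k$. When $y \notin T$, each $y' \in T$ satisfies $|\supp{y'}\cap S| \le k-1$, so $\Qstar{y'}$ yields at most $Q^{k-1}$ distinct patterns on $S$, giving at most $|T|\,Q^{k-1}$ distinct observed rows carrying total $\ell_1$-mass $|T|\,Q^k$. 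Concavity of $x^p$ for $p<1$ then upper-bounds $F_p(A,S) \le (|T|\,Q^{k-1})^{1-p}(|T|\,Q^k)^p = |T|\,Q^{k-(1-p)}$. The ratio between the two cases is therefore at least $Q^{1-p}/|T|$, so choosing $Q$ polynomially larger than $|T| \le \binom{d}{k}$ in the exponent $1/(1-p)$ makes the gap exceed any prescribed constant. If a small alphabet is desired I would then apply the small-alphabet reduction in the spirit of Corollary~\ref{cor: lower-bound-deterministic-small-alphabet}, paying a $\log_q Q$ blow-up in the dimensionality.

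The step I expect to be the main obstacle is the $F_p$ bookkeeping in each regime. In the $p>1$ case one must verify that the non-$\zerovec_S$ portion of $F_p(A,S)$ really is the same (up to the desired tolerance) in both scenarios, which requires re-doing the careful parameter choice of Theorem~\ref{thm: heavy-hitters-lower-bound} while tracking $p$-th powers rather than just the heavy-hitter threshold condition. In the $p<1$ case the delicate step is the upper bound in Case~2: individual frequencies can be as large as $|T|$ when many codewords funnel mass onto the same pattern, but by concavity the worst case for $F_p$ is the uniform spread, so the crude $(\text{distinct patterns})^{1-p}(\text{total mass})^{p}$ bound suffices. Once the gap is established, Bob's algorithm is standard: run the hypothesized constant-factor $F_p$ algorithm on $A^S$, compare the output to a threshold chosen between the two cases, and answer the \INDEX instance accordingly, incurring the $\Omega(|\code{C}|)=2^{\Omega(d)}$ space lower bound.
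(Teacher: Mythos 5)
Your $p>1$ half is essentially the paper's argument: re-use the heavy-hitters instance of Theorem~\ref{thm: heavy-hitters-lower-bound} and observe that the $F_p$ value itself (not just the heavy-hitter status of $\zerovec_S$) separates the two cases by a constant factor, since the presence of $\Qstar{y}$ adds an extra $2^{\epsilon p d}$ on top of the $2^{\epsilon pd}$ from the all-ones block while everything else is $o(2^{\epsilon pd})$ under the same parameter choices. (Your side remark about amplifying the gap by adding more copies of $\mat{1}_d$ goes the wrong way --- that inflates $F_p$ equally in both cases and shrinks the relative gap --- but no amplification is needed for the statement as written.) Your $p<1$ half is correct but takes a genuinely different route. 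The paper stays with a \emph{binary} alphabet and the random code of Lemma~\ref{lem: hh-code-existence}: it queries $S=\supp{y}$, notes that when $y\notin T$ every projected pattern lies in the set $M$ of strings of weight at most $cd$ on $S$, and bounds $F_p(M)\le |\code{C}|^p 2^{\epsilon dp} r^{1-p}$ by the same concavity argument you use, with the separation coming from choosing the intersection parameter $c$ small (Lemma~\ref{lem: c-bound-Fp}). You instead use the deterministic code $\mathcal{B}(d,k)$ and extract the separation from the alphabet size: the ratio $Q^{1-p}/|T|$ forces $Q \ge (\Theta(|T|))^{1/(1-p)} = 2^{\Theta(d)}$, since $|T|$ must be allowed to be as large as $|\code{C}|=2^{\Omega(d)}$ for the \INDEX reduction to give the claimed bound. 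This proves the theorem as literally stated (the model permits arbitrary $Q$), but it is weaker than what the paper establishes: the hard instance needs an exponentially large alphabet, and your proposed repair via the re-encoding of Corollary~\ref{cor: lower-bound-deterministic-small-alphabet} blows the dimension up to $d'=d\log_q Q=\Theta(d^2)$, so the bound in terms of the new dimension degrades to $2^{\Omega(\sqrt{d'})}$ rather than $2^{\Omega(d')}$. If you want the binary-alphabet statement (which is what Remark~\ref{rem: Fp-instance} and the upper bounds of Section~\ref{sec:upperbounds} are calibrated against), you need the paper's device of a code with pairwise intersections bounded by $cd$ for tunably small $c$, so that the collision set $M$ is small enough for concavity alone to win without enlarging the alphabet.
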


\begin{proof}
For $p > 1$
we begin by noticing that in the proof for Theorem
\ref{thm: heavy-hitters-lower-bound} one can also monitor the $F_p$ value of
the input to the problem rather than simply checking the heavy hitters.
In particular, depending on whether or not Alice holds Bob's test word, $y$,
the projected $F_p$ changes by more than a constant.
Consequently, we invoke the same proof for $F_p$, $p>1$ and obtain the same
$2^{\Omega(d)}$ lower bound.

On the other hand, suppose that $p < 1$.
We assume a code $\code{C} \subset \code{B}(d, \epsilon d)$ with the property that
any distinct $x,x' \in \code{C}$ have $|x \cap x'| \le cd$ for some small constant
$c > \epsilon^2$ (see Lemma~\ref{lem: hh-code-existence}).
Again, Alice holds a subset $T \subseteq \code{C}$ and
inserts $\Qstar{T}$
into the table for the problem $A$.
Throughout this proof we use a binary alphabet so suppress the $Q$
notation from $\QstarFull{\cdot}{Q}$.
Bob holds a test
vector $y \in \code{C}$ and is tasked with determining whether or not Alice
holds $y \in T$.
We distinguish between the cases when Alice holds $y \in T$ or not as follows.
Bob uses $y$ to determine the query column set $S= \supp{y}$ and
will compare against the returned frequency value from the algorithm.

\medskip
\noindent
\textbf{Case 1: $y \not \in T$.}
Consider some $y' \in \code{C} \setminus \{ y \}$.
Since $y$ and $y'$ are both codewords, they can have a $1$ coincident
in at most $cd$ locations.
So if Alice does not hold $y$ then the codewords we need to consider are
all binary words in the code which
have at most $cd$ $1$s in common with $y$ on $S$.
We denote this collection of words by $M$, i.e., the set of binary strings of
length $d$ that have at most $cd$ locations set to $1$.
There are $r$ such vectors, where
$r$ is defined by:
\[
r \triangleq \sum_{i=0}^{cd} {d \choose i} \le cd \cdot {d \choose cd}
                            = O(d) 2^{\Theta(cd)} .
                            \]
The total count of all strings generated by Alice's encoding is
at most $2^{\epsilon d}|\code{C}|$: each string in $\code{C}$ generates
$2^{\epsilon d}$ subwords from the $\Qstar{\cdot}$ operation.
We now evaluate the $\ell_p$-frequency of elements in the set $M$, denoted
$F_p(M)$.
For $p<1$, the value $F_p(M)$ is maximized when every element of $M$ has the same
number of occurrences, $|\code{C}|2^{\epsilon d}/r$.
As there are at most $r$ members of $M$, we obtain
$F_p(M) \le |\code{C}|^p 2^{\epsilon d p} r^{1-p}$.
Recalling the bounds on $|\code{C}|$ and $r$, this is:
\begin{equation} \label{eq: Fp-value-to-bound}
2^{cdp + \epsilon dp + \Theta( (1-p)cd )}\cdot O(d^{1-p}).
\end{equation}

We can now choose $c$ to be a small enough constant so that
\eqref{eq: Fp-value-to-bound} is at most $2^{(1-\alpha)\epsilon d}$ for a
constant $\alpha > 0$ by Lemma~\ref{lem: c-bound-Fp} in Appendix
\ref{sec: Fp-estimation-bound}.

\medskip
\noindent
\textbf{Case 2: $y \in T$.}
Now consider the scenario when $y \in T$ so that Alice has inserted $\Qstar{y}$
into the table $A$.
Here, we can be sure that each of the $2^{\epsilon d}$ strings in
$\Qstar{y}$ appears at least once over the column set $S$, and so the
$F_p$ value is at least $2^{\epsilon d} 1^p = 2^{\epsilon d}$.

\medskip
We observe that these two cases
obtain the constant factor separation, as required.
Then, Bob can use his test vector $y$ and a query $S$ with a constant
factor approximation algorithm to the projected $F_p$-estimation problem and
distinguish between the two cases of Alice holding $y$ or not.
Thus, Bob can determine if $y \in T$ and consequently solve the \INDEX{}
problem,  incurring the $\Omega(|\code{C}|) = 2^{\Omega_c(d)}$ lower bound
for a $c$ arbitrarily small.
%
%
\end{proof}

\begin{remark} \label{rem: Fp-instance}
For $p > 1$ we adopt the same instance as in Theorem
\ref{thm: heavy-hitters-lower-bound} so the instance is of size
$(|T| + 1)2^{\eps d} \times d$.
On the other hand, for $0 < p < 1$, only the words in $\QstarFull{T}{Q}$
are required so $A$ has size $|T|2^{\eps d} \times d $.
\end{remark}

\subsection{$\ell_p$-Sampling}
In the projected $\ell_p$-sampling problem, the goal is to sample a row
in $A^C$ proportional to the $p$-th power of its number of occurrences.
One approach to the
standard (non-projected) $\ell_p$-sampling problem on a vector $x$ is
to subsample and find the $\ell_p$-heavy hitters \cite{Larsen:Nelson:Nguyen:Thorup:16}. Consequently,
if one can find $\ell_p$-heavy hitters for a certain value of $p$, then
one can perform $\ell_p$-sampling in the same amount of space, up to
polylogarithmic factors.
Interestingly, for projected $\ell_p$-sampling,
this is not the case, and we show for every $p \neq 1$, there is a
$2^{\Omega(d)}$ lower bound.
This is despite the fact that we can estimate
$\ell_p$-frequencies efficiently for $0 < p < 1$, and hence find the
heavy hitters (Section~\ref{sec:hhupper}).

\begin{theorem}
Fix a real number $p > 0$ with $p \neq 1$, and let $\eps \in (0,1/2)$.
Let $S \subseteq [d]$ be a column query and $i$ be a pattern observed on
the projected data $A^S$.
Any algorithm which returns a pattern $i$ 
sampled from a distribution $(p_1, \ldots, p_n)$, where
$p_i \in (1 \pm \eps) \frac{f_{e(i)}^p}{\|f(A,S)\|_p^p} + \Delta$ together with a
$(1 \pm \eps')$-approximation to $p_i$, $\Delta = 1/\poly{nd}$ and
$\eps' > 0$ is a sufficiently small constant, requires $2^{\Omega(d)}$ bits of space.
\end{theorem}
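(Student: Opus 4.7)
The plan is to reduce from \INDEX{} by re-using the instance constructions of Theorems~\ref{thm: heavy-hitters-lower-bound} and~\ref{thm: F_p-estimation-lower-bound}, treating $p>1$ and $0<p<1$ separately. In both cases Alice encodes her subset $T\subseteq \code{C}$ exactly as in the referenced theorem and Bob issues the same column query $S$; the new argument is to show that a single invocation of the $\ell_p$-sampler already places sufficiently different mass on an identifiable witness set to let Bob decide whether $y\in T$ with constant probability. To avoid notational clash, below $\epsilon$ refers to the code-weight parameter of the cited constructions rather than to the sampler accuracy of the theorem statement.

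For $p>1$ the heavy-hitters construction already does most of the work. By the analysis of Theorem~\ref{thm: heavy-hitters-lower-bound}, when $y\in T$ the pattern $\zerovec_S$ satisfies $f_{e(\zerovec_S)}^p/F_p(A,S) = \Omega(1)$, while when $y\notin T$ we have $f_{e(\zerovec_S)}^p/F_p(A,S) \le 2^{-\Omega(d)}$. Choosing the sampler's multiplicative parameters small constants and $\Delta = 1/\poly{nd}$ well below this gap, Bob runs the sampler once and declares $y\in T$ iff the returned index equals $e(\zerovec_S)$ and the reported $\hat p_i$ exceeds a suitable constant threshold. This decides \INDEX{} on $|\code{C}| = 2^{\Omega(d)}$ bits with constant probability.

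For $0<p<1$ we use the $F_p$-construction: Alice inserts $\Qstar{T}$ in binary and Bob queries $S = \supp{y}$. Call a pattern $v$ on $S$ \emph{high-weight} if $|\supp{v}| > cd$. The key combinatorial fact is that by the code property $|\supp{y}\cap\supp{y'}|\le cd$ for every $y'\ne y$, a high-weight pattern on $S$ is unreachable from the child words of any $y'\ne y$: the projection of a child word of $y'$ onto $S$ is supported inside $\supp{y}\cap\supp{y'}$, a set of at most $cd$ coordinates. Hence in the case $y\in T$ the child words $\Qstar{y}$ contribute $\Theta(2^{\epsilon d})$ distinct high-weight patterns, each of multiplicity exactly one, while in the case $y\notin T$ no high-weight pattern appears at all. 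Taking $c,\gamma$ sufficiently small (using the estimates of Lemma~\ref{lem: c-bound-Fp}), the low-weight contribution to $F_p$ is $o(2^{\epsilon d})$, so the aggregate sampling mass on high-weight patterns is $\Omega(1)$ in the first case and at most $O(\Delta\cdot 2^{\epsilon d})$ in the second, which is made negligible by taking $\Delta$ sufficiently inverse-polynomial in $nd$. Bob then decides simply by inspecting the Hamming weight on $S$ of the returned pattern.

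The main obstacle is the $0<p<1$ regime, where no single item carries constant mass (ruling out a direct heavy-hitter-style witness) and the sampler reports only a probability estimate rather than a frequency (ruling out a direct reduction via $F_p \approx f_i^p/\hat p_i$). The high-weight argument sidesteps both issues by producing $\Theta(2^{\epsilon d})$ combinatorial witnesses whose collective sampling mass is $\Omega(1)$, with the pattern's Hamming weight acting as the decision statistic; the returned estimate $\hat p_i$ is needed only to absorb the additive noise $\Delta$. In both regimes the reduction gives an algorithm solving \INDEX{} of size $|\code{C}| = 2^{\Omega(d)}$, yielding the claimed $2^{\Omega(d)}$-bit space lower bound.
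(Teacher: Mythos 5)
Your proposal is correct and follows essentially the same route as the paper: for $p>1$ it reuses the heavy-hitters instance with $\zerovec_S$ as the witness, and for $0<p<1$ it reuses the $F_p$ instance with the high-Hamming-weight child words of $y$ on $S$ as the witness set (the paper's $M'$), using the code's intersection bound to show these patterns are unreachable from any $y'\neq y$. The one point to tighten is that a single invocation of the sampler hits the witness only with probability $\Omega(1)$ (e.g.\ $1/10$) when $y\in T$, which alone does not answer \INDEX{} with error below $1/2$; as the paper notes, Bob should amplify with a constant number of independent repetitions, exploiting that the false-positive rate in the $y\notin T$ case is negligible.
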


\begin{proof}
\noindent
\textbf{Case 1: $p>1$.}
The proof of Theorem \ref{thm: heavy-hitters-lower-bound}
argues that the vector $\zerovec_S$ is a constant factor
$\ell_p$-heavy hitter for any $p > 1$ if and only if Bob's test vector $y$ is in
Alice's input set $T$, via a reduction from \INDEX.
That is, we argue that there are constants
$C_1 > C_2$ for which if $y \in T$, then $f_{e(\zerovec_S)}^p \geq C_1 F_p$, while
if $y \notin T$, then $f_{e(\zerovec_S)}^p < C_2 F_p$.
Consequently, given an $\ell_p$-sampler with the guarantees as
described in the theorem statement,
then the (empirical) probability of sampling the item
$\zerovec_S$ should allow us to distinguish the two cases.
This holds even tolerating the $(1+\eps')$-approximation in
sampling rate, for a sufficiently small
constant $\eps'$.
In particular,
if $y \in T$, then we will indeed sample $\zerovec_S$ with $\Omega(1)$
probability, which can be amplified by independent repetition;
whereas, if $y\notin T$, we do not expect to sample $\zerovec_S$ more
than a handful of times.
Consequently, for $p>1$, an $\ell_p$-sampler can be used to solve the
$\ell_p$-heavy hitters problem with arbitrarily large constant
probability, and thus requires $2^{\Omega(d)}$ space.

\medskip
\noindent
\textbf{Case 2: $0 < p < 1$.}
We now turn to $0 < p < 1$.
In the proof of Theorem \ref{thm: F_p-estimation-lower-bound}, a
reduction from \INDEX is described where Alice holds the
set $T$ and Bob the string $y$.
Bob can generate the set $\Qstar{y}$ of size $2^{\eps d}$ which is
all possible binary strings supported on the column query $S$.
From this, Bob constructs the set
$M' = \left\{ z \in \Qstar{y} : |\supp{z}| \ge \frac{\eps d}{2} \right\}$.
We observe that if $y \in T$ then at least half of the strings in
$\Qstar{y}$ are supported on at least $\eps d/2$
coordinates which implies $|M'| \ge 2^{\eps d - 1}$.
The total $F_p$ in this case can be bounded by a contribution of
$|M'|1^p + 2^{\eps d}$.
The first term arises from the $|M'|$ strings in $M'$ with a frequency
of $1$, while the second term is shown in Case 1 of Theorem~\ref{thm:
  F_p-estimation-lower-bound}.
Since $|M'| \leq 2^{\eps d}$, we have that $F_p \leq 2^{\eps d
  + 1}$ in this case.
Consequently, the correct probability of $\ell_p$-sampling returning a
string in $M'$ is at least $\frac14$ for the ``ideal'' case of $\eps
=0, \Delta=0$.
Even allowing $\eps < \frac12$ and
$\Delta=1/\operatorname{poly}({nd})$, this probability is at least
$1/10$.

Otherwise, if $y \not \in T$,
we exploit that $y' \ne y$ can coincide in at most $cd =
O(\eps^2 d)$ coordinates
and $|\supp{z}| \ge \eps d/2 > cd$ for any $z \in M'$.
Hence, no $z \in M'$ can occur in $\Qstar{y'}$ for another
$y' \in \code{C} \setminus \{y\}$ on the column projection $S$.
In this case, there should be zero probability of sampling a string in
$M'$ (neglecting the trivial additive probability $\Delta$).

To summarize,
in the case that $y \in T$, by querying the projection $S$
then a constant fraction of the $F_p$-mass is on the set $M'$,
whereas when $y \notin T$, then there is zero $F_p$-mass on the set $M'$.
Since Bob knows $M'$, he can run an $\ell_p$-sampler and check if the output
is in the set $M'$, and succeed with constant probability.
It follows that Bob can solve the \INDEX problem (amplifying success
probability by independent repetitions if needed), and thus again
the space required is $2^{\Omega(d)}$.
\end{proof}

\begin{remark} \label{rem: lp-sampling-instance}
For $p > 1$ we again adopt the same instance as in Theorem
\ref{thm: heavy-hitters-lower-bound} which has size
$(|T| + 1)2^{\eps d} \times d$.
However, for $0 < p < 1$, we require the instance from Theorem
\ref{thm: F_p-estimation-lower-bound}
so $A$ has size $|T|2^{\eps d} \times d $.
\end{remark}

\section{Projected Frequency Estimation via Set Rounding}
\label{sec:upperbounds}
Although our lower bounds rule out the possibility of computing constant factor
approximations to projected frequency problems in sub-exponential space, it is 
still possible to compute non-trivial approximations using exponential space but still
better than nai\"vely enumerating all column subsets of $[d]$.
We design a class of algorithms that proceed by keeping
appropriate sketch data structures for a ``net'' of subsets.
The net has the property that for any query $C \subset [d]$ there is a
$C' \subset [d]$ stored in the net which is not too different from
$C$.
We can then answer the query on $C$ using the summary data structure
computed for columnset $C'$.
To formalize this approach we need some further definitions, the first
of which conceptualizes the notion of a net over subsets.

\begin{definition}[$\alpha$-net of subsets]
    Let $\powerset{[d]}$ denote the power set of $[d]$.
    Fix a parameter $\alpha \in (0,1/2)$.
    An $\alpha$-net of $\powerset{[d]}$ is the set 
    $\code{N} = \{U : |U| \le 2^{d/2 - \alpha d} \text{~or~} |U| \ge 2^{d/2 + \alpha d} \}$
    which contains all subsets $U \in \powerset{[d]}$ whose size is at most $2^{d/2 - \alpha d}$
    or at least $2^{d/2 + \alpha d}$.
\end{definition}
Let $H(x) = - x \log_2(x) - (1-x) \log_2(1-x)$ denote the {\it binary entropy function}.
\begin{lemma}
    \label{lem:net-size}
    Let $\code{N}$ be an $\alpha$-net for $\powerset{[d]}$.
    Then $|\code{N}| \le 2^{H(1/2 - \alpha)d + 1}$.
\end{lemma}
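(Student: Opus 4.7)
The plan is to apply the classical entropy bound on the cumulative binomial sum, together with the symmetry of the two tails that define $\code{N}$. Interpreting the thresholds so that the lemma is consistent with its stated conclusion (i.e.\ the subset sizes are at most $(1/2 - \alpha)d$ or at least $(1/2 + \alpha)d$), I will first expand by summing over the possible cardinalities, writing
\[
|\code{N}| \;=\; \sum_{k \le (1/2 - \alpha)d} \binom{d}{k} \;+\; \sum_{k \ge (1/2 + \alpha)d} \binom{d}{k}.
\]
The natural first move is to exploit $\binom{d}{k} = \binom{d}{d-k}$ to re-index the upper-tail sum as a lower-tail sum, collapsing the expression to $|\code{N}| = 2 \sum_{k=0}^{\lfloor (1/2 - \alpha)d\rfloor} \binom{d}{k}$. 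This is where the factor of $2$ (equivalently, the $+1$ in the exponent of the target bound) comes from, so I would highlight it explicitly.

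The second and main step is to invoke the standard inequality $\sum_{k=0}^{\lfloor \beta d \rfloor} \binom{d}{k} \le 2^{H(\beta)d}$, valid for $\beta \in (0, 1/2]$, applied with $\beta = 1/2 - \alpha$. Since $\alpha \in (0, 1/2)$, the argument $1/2 - \alpha$ lies in the admissible range, so the estimate applies and gives $|\code{N}| \le 2 \cdot 2^{H(1/2 - \alpha)d} = 2^{H(1/2 - \alpha)d + 1}$, which is exactly the claimed bound.

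There is no real obstacle in this argument: both ingredients (the tail symmetry and the entropy bound on partial binomial sums) are classical, and plugging in $\beta = 1/2 - \alpha$ matches the exponent in the statement precisely. The only care needed is on the interpretation of the thresholds in the definition and ensuring that one does not lose a constant factor when folding the two tails together; beyond that, the proof is a one-line computation.
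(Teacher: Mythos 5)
Your proof is correct and takes essentially the same route as the paper: both arguments fold the upper tail into the lower tail via the symmetry $\binom{d}{k}=\binom{d}{d-k}$ and then apply the entropy bound $\sum_{k\le \beta d}\binom{d}{k}\le 2^{H(\beta)d}$ with $\beta = 1/2-\alpha$, the two tails accounting for the $+1$ in the exponent. Your reading of the size thresholds as $(1/2\pm\alpha)d$ is indeed the intended one.
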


\begin{proof}
    The total number of subsets whose size is at most 
    $2^{d/2 - \alpha d} $ is $\sum_{i\le \alpha d} {d \choose i}$
    and  $\sum_{i\le \alpha d} {d \choose i} \le 2^{H(1/2 - \alpha)d}$
    \cite[Theorem 3.1]{galvin2014three}.
    By symmetry we obtain the same bound for the number of subsets 
    of size at least $2^{d/2 + \alpha d}$,
    yielding the claimed total.
\end{proof}

\eat{
\begin{definition}[$\alpha$-neighbor]
    Let $C \in \powerset{[d]}$ and $\code{N}$ be an $\alpha$-net.
    If $C \notin \code{N}$ then a {\it $\alpha$ neighbour} is any $C' \in \code{N}$
    which satisfies
    $| C \triangle C'| \leq \alpha d$. 
%
\end{definition}
}

\subsection{From \net{}s to Projections}

\begin{algorithm}[t]
    \KwIn{Data $A \in \{0,1\}^{n \times d}$, 
    parameter $\alpha \in (0,1/2)$, 
    frequency estimation problem $P$,
    query $C$ revealed after $A$}
    \SetKwFunction{FMain}{ProjectedFreq}
    \SetKwProg{Fn}{Function}{:}{}
    \Fn{\FMain{$A,\alpha, C$}}{
        Generate an $\alpha$-net $\code{N}$ \\
        For every $U \in \code{N}$ evaluate a $\beta$ approximate sketch to 
        estimate $P(A,U)$ \\
        Given a projection query $C$ after observing $A$: \\
        Obtain $C'$, an \neighbour{} to $C$ in $\code{N}$ \\
        \Return{$P(A,C')$ to $\beta$ relative error}
    }
     \caption{Projected frequency by query rounding}
     \label{alg:projected-freq-rounding}
\end{algorithm}

\begin{figure*}[ht]
    \centering
    \includegraphics[width=\textwidth]{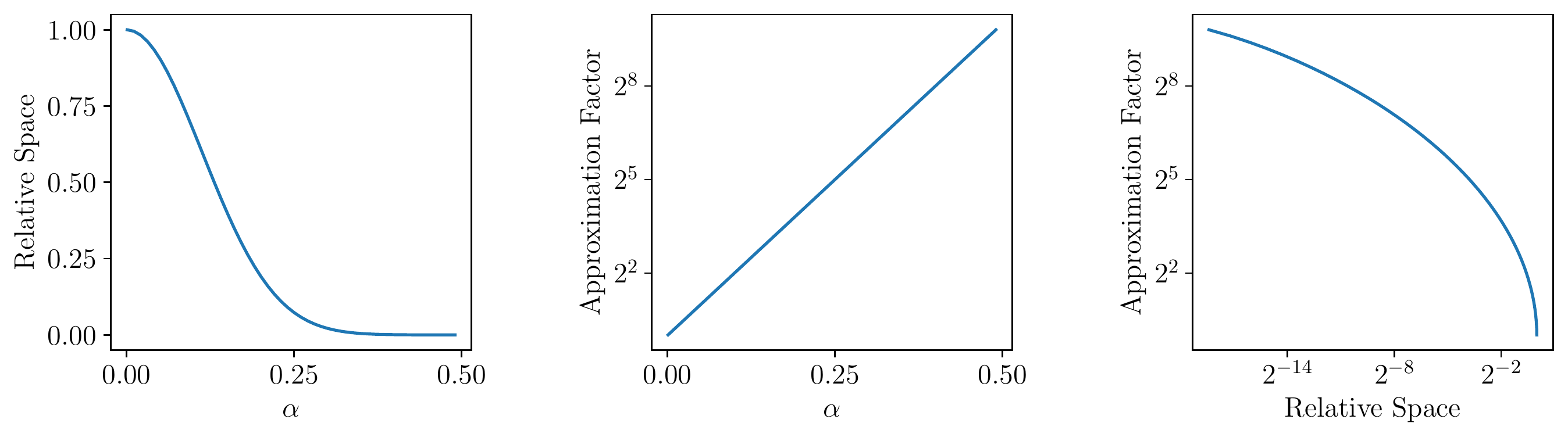}
    \caption{Space-approximation tradeoff for $d=20$
    as $\alpha$ is varied from $0$ to $1/2$.
    Relative space is $2^{H(1/2 - \alpha)d} / 2^d$.}
    \label{fig:space-approx-tradeoff}
\end{figure*}

Suppose that we are tasked with answering problem $P = P(A,C)$ on a projection query $C$.
We know that if $C$ is known ahead of time then we can encode the input data
$A \in [Q]^{n \times d}$ on projection $C$ as a standard stream over the alphabet $[Q]^{|C|}$.
The use of \net{}s allows us sketch some of the input and use this to approximately 
answer a query.
For a standard streaming problem, we will say that an algorithm yields
a \emph{
$\beta$-approximation} to the true solution $z^{*}$ if the returned
estimate $z \in [z^*/\beta, \beta z^*]$.
A sketch obtaining such approximation guarantees will be referred to
as a $\beta$ approximate 
sketch.
We additionally need the following notion of error due to the distortion incurred when answering
queries on elements of the \net{} rather than the given query.

\begin{definition}[Rounding distortion]
    \label{def:rounding-distortion}
    Let $P = P(A,C)$ be a projection query for the problem $P$ on input 
    $A \in [Q]^{n \times d}$ with projection $C$.
    Let $\code{N} \subset \powerset{[d]}$ be an \net{}.
    The \emph{rounding distortion $r(\alpha,P)$} is the worst-case determinstic error 
    incurred by solving $P(A,C')$ rather than $P(A,C)$ for an $\alpha$-neighbour 
    $C' \in \code{N}$ of $C$ so that $P(A,C)/r(\alpha,P) \le P(A,C') \le r(\alpha,P) P(A,C)$.
\end{definition}

Definition \ref{def:rounding-distortion} is easiest to conceptualize for the $F_0$ problem
when $A \in \{0,1\}^{n \times d}$.
Specifically, $P=F_0$ and the task to solve is $P = F_0(A,C)$.
For a given query $C$, with an \neighbour{} $C'$ in the net,
the gap between the number of distinct items 
observed on $C'$
at most doubles for each column in the set difference between $C$ and
$C'$.
Since $C'$ is an \neighbour{}, we have $|C'\operatorname{\Delta} C| \le \alpha d$ so the worst-case 
approximation factor in the number of distinct items observed over
$C'$ rather than $C$ is $2^{\alpha d}$.

More generally, we can categorize the rounding distortion for other typical queries, as 
demonstrated in the following lemma.
Note that if the query is contained in the \net{} $\code{N}$ then we
will retain a sketch for that problem; hence the distortion is only
incurred for queries not contained in the net. 

\begin{lemma}
    \label{lem:rounding-distortion}
    Fix $\alpha \in (0,1/2)$, suppose $A \in \{0,1\}^{n \times d}$ and $\code{N}$ be an \net{}.
    If $C$ is a projection query for the following cases, the rounding distortion can be bounded as:
    \begin{enumerate}   
        \item{$P = F_0(A,C)$ then $r(\alpha, F_0) = 2^{\alpha d}$} \label{item:rounding-f0}
        \item{$P = F_p(A,C), p > 1$ then $r(\alpha, F_p) = 2^{\alpha d (p-1)}$} \label{item:rounding-fp-gt1}
        \item{$P = F_p(A,C), p <  1$ then $r(\alpha, F_p) =  2^{\alpha d (1-p)}$} \label{item:rounding-fp-lt1}
    \end{enumerate}
\end{lemma}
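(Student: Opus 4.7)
The plan is to prove each bound by reducing to a single-column update and then iterating. Since $C'$ is an $\alpha$-neighbour of $C$, we have $|C \triangle C'| \le \alpha d$, so we can pass from $A^C$ to $A^{C'}$ by a sequence of at most $\alpha d$ individual column additions or removals. It therefore suffices to establish a uniform per-column distortion factor in each case and take the product over the $\alpha d$ edits.

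For the per-column step, consider adding a single column $j$ to $C$. Every pattern $w \in \{0,1\}^{|C|}$ with frequency $f_w$ splits, according to its value in column $j$, into two patterns $w0, w1$ whose frequencies $f'_{w0}, f'_{w1}$ sum to $f_w$. For $F_0$, a nonzero $f_w$ contributes at least one and at most two nonzero values after the split, so $F_0(A,C) \le F_0(A, C \cup \{j\}) \le 2 F_0(A, C)$, giving per-column factor $2$ and item~\ref{item:rounding-f0}. For $F_p$ with $p > 1$, I would apply the elementary inequality $a^p + b^p \le (a+b)^p \le 2^{p-1}(a^p + b^p)$ pointwise and sum over $w$ to obtain $F_p(A, C \cup \{j\}) \le F_p(A, C) \le 2^{p-1} F_p(A, C \cup \{j\})$, a per-column factor of $2^{p-1}$, from which item~\ref{item:rounding-fp-gt1} follows. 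For $0 < p < 1$ the inequality reverses to $(a+b)^p \le a^p + b^p \le 2^{1-p}(a+b)^p$, yielding per-column factor $2^{1-p}$ and item~\ref{item:rounding-fp-lt1}. Column removals are handled by swapping the roles of ``before'' and ``after'', so the same per-column factor governs both directions.

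There is no real obstacle here: the argument reduces to identifying the correct two-sided power inequalities (each a consequence of the power-mean inequality applied to two values) and iterating $\alpha d$ times. The only mild care needed is to verify both directions of the defining inequality $P(A,C)/r(\alpha,P) \le P(A,C') \le r(\alpha,P)\, P(A,C)$, but these follow symmetrically from the two-sided per-column bounds. Note that the intermediate projections traversed during the edit sequence need not themselves lie in $\code{N}$; the inequalities are purely analytic statements about a pair of projections differing by one column, so the telescoping is valid.
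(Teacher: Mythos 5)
Your proof is correct and arrives at exactly the bounds claimed. The paper's argument is the ``one-shot'' version of yours: assuming (without loss of generality) $C \subseteq C'$, it observes that each frequency $x$ in $f_C$ is spread across at most $2^{\alpha d}$ entries of $f_{C'}$ and applies Jensen's inequality once to that multi-way split, whereas you iterate the two-point power-mean inequality $a^p+b^p \le (a+b)^p \le 2^{p-1}(a^p+b^p)$ (and its reversal for $p<1$) over at most $\alpha d$ single-column edits; composing your per-column factor $2^{|p-1|}$ over the edit sequence recovers the same estimate. A minor advantage of your telescoping is that it transparently covers the case where $C \triangle C'$ contains both added and removed columns, which the paper's ``without loss of generality $|C| < |C'|$'' passes over and only addresses with a closing remark that ``essentially the same argument'' applies.
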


\begin{proof}
    Item \eqref{item:rounding-f0} is an immediate consequence of the
    discussion above following Definition \ref{def:rounding-distortion}
    so we focus on \eqref{item:rounding-fp-gt1} and \eqref{item:rounding-fp-lt1}.
    Suppose $p \ge 1$.
    Let $f_C = f(A,C)$ denote the frequency vector associated to the projection query $C$ over domain $[2^{|C|}]$. 
    First, consider a single index $j \in [2^{|C|}]$ with $(f_C)_j = x$.
    Let $C'$ be an \neighbour{} for $C$ in $\code{N}$, and without
    loss of generality, assume that $|C| < |C'|$.
    The task is to estimate $\|f_C\|_p^p = x^p$
from $\|f_{C'}\|_p^p$, where $f_{C'} = f(A,C')$ is a frequency 
    vector over the domain $[2^{|C'|}]$ which is a $|C'\setminus C|$ factor larger than the domain for $f_C$.
    However, observe that in $f_{C'}$, the value of $x$ is spread
    across the at most $2^{\alpha d}$ entries that agree with $j$
    on columns $C$.
    The contribution to $F_p$ from these entries is at most $x^p$ (if
    the mass of $x$ is mapped to a single entry).
    On the other hand, by Jensen's inequality,
    the contribution is at least $2^{\alpha d}(x/2^{\alpha d})^p =
    x^p/2^{\alpha d(p-1)}$.  
    Hence, considering all entries $j$, we obtain
%
%
%
    $\| f_{C}\|_p^p/2^{\alpha d(p-1)} \le \| f_{C'}\|_p^p \le \| f_{C}\|_p^p$.
    In the case $|C| > |C'|$, essentially the same argument shows that 
$\| f_{C}\|_p^p \le \| f_{C'}\|_p^p \le \| f_{C}\|_p^p 2^{\alpha d(p-1)}$.
    Thus we obtain the rounding distortion of $2^{\alpha d(p-1)}$. 
    For $p < 1$, we proceed as above, except by concavity, the ordering is reversed.
\end{proof}

Observe that the distortion reduces to $1$ (no distortion) as we
approach $p=1$ from either side.
This is intuitive, since the $F_1$ problem is simply to report the
number of rows in the input, regardless of $C$, and so the problem
becomes ``easier'' as we approach $p=1$. 

With these properties in hand, we can give a ``meta-algorithm'' as described in 
Algorithm \ref{alg:projected-freq-rounding}.
In Theorem \ref{thm:projected-freq-net} we can fully characterize the accuracy-space 
tradeoff for Algorithm \ref{alg:projected-freq-rounding} as a function of $\alpha$ and $d$.

\begin{theorem}
    \label{thm:projected-freq-net}
    Let $A \in \{0,1\}^{n \times d}$ be the input data and $C \subseteq [d] $ be a projection query.
    Suppose $P = P(A,C)$ is the projected frequency problem, $\alpha \in (0,1/2)$ and 
    $r(\alpha, d) $ is the rounding distortion.
    With probability at least $1-\delta$
    a $\beta r(\alpha, d)$ approximation can be obtained
    by keeping $\tilde{O}(2^{H(1/2 - \alpha)d})$ $\beta$-approximate
    sketches.
\end{theorem}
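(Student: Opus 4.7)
The plan is to analyze Algorithm~\ref{alg:projected-freq-rounding} directly by composing the two ingredients already established: the size bound on the \net{} (Lemma~\ref{lem:net-size}) and the rounding distortion bound (Lemma~\ref{lem:rounding-distortion}). The high-level structure is: bound the number of stored sketches, ensure every sketch succeeds with high probability, then multiply the per-sketch relative error by the rounding distortion incurred when the true query $C$ is served from a nearby $C' \in \code{N}$.

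First I would build the \net{} $\code{N}$ explicitly and note that $|\code{N}| \le 2^{H(1/2 - \alpha) d + 1}$ by Lemma~\ref{lem:net-size}. For each $U \in \code{N}$, maintain in parallel over the stream an independent $\beta$-approximate sketch of the projected quantity $P(A, U)$, each tuned to fail with probability at most $\delta/|\code{N}|$. Since the size of such a sketch depends only polylogarithmically on its failure probability, the $\log|\code{N}| = O(d)$ overhead is absorbed into the $\tilde{O}(\cdot)$ notation, and the total number of sketches is $\tilde{O}(2^{H(1/2 - \alpha) d})$. A union bound ensures that with probability at least $1-\delta$ every sketch delivers a $\beta$-approximation to its target $P(A, U)$.

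Next I would handle the query phase. Upon receiving the column query $C$, if $C \in \code{N}$ we answer from the stored sketch directly. Otherwise $|C|$ lies in the excluded middle band, and I would show that we may produce $C' \in \code{N}$ with $|C \triangle C'| \le \alpha d$ by greedily adding or removing at most $\alpha d$ indices of $[d]$ to push $|C'|$ outside the excluded range. Lemma~\ref{lem:rounding-distortion} (via Definition~\ref{def:rounding-distortion}) then certifies that $P(A, C')$ is within a factor $r(\alpha, d)$ of $P(A, C)$. Composing the two guarantees multiplicatively, the returned estimate is a $\beta$-approximation of $P(A, C')$, which is an $r(\alpha, d)$-approximation of $P(A, C)$, yielding the claimed $\beta \cdot r(\alpha, d)$ bound.

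There is no deep obstacle here: the theorem is essentially a clean composition of the prior two lemmas with a union bound. The only mild bookkeeping concern is ensuring the per-sketch failure probability $\delta/|\code{N}|$ only inflates individual sketch sizes by polylogarithmic factors in $d$ and $1/\delta$, which is exactly what the $\tilde{O}$ notation hides. No further structural argument about $P$ is needed, since Lemma~\ref{lem:rounding-distortion} already supplies the necessary deterministic distortion bounds for the projected frequency problems of interest.
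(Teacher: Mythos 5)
Your proposal is correct and follows essentially the same route as the paper's proof: build the \net{}, keep one $\beta$-approximate sketch per element (with size bounded via Lemma~\ref{lem:net-size}), answer a query either directly or from an \neighbour{}, compose the sketch guarantee with the rounding distortion of Lemma~\ref{lem:rounding-distortion}, and union-bound the failure probabilities. Your choice of per-sketch failure probability $\delta/|\code{N}|$ and your explicit construction of the \neighbour{} are slightly more careful than the paper's treatment, but they do not constitute a different argument.
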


\begin{proof}
Let $\code{N}$ be a \net{} for $\powerset{[d]}$ and for every $U \in
\code{N}$ generate a sketch with accuracy parameter $\epsilon$ 
for the problem $P$ on the projection defined by $U \subseteq [d]$.
Either the projection $C \in \code{N}$, in which case we can report a
$\beta$ factor approximation, or 
$C \notin \code{N}$ in which case we take an \neighbour{}, $C' \in \code{N}$ and return the estimate $z$ for $P(A,C')$.
The sketch ensures that the answer to $P(A,C')$ is obtained with
accuracy $\beta$, 
which
by the rounding distortion
is a $\beta r(\alpha, d)$ approximation. 
To obtain this guarantee we
build one 
sketch for every $U \in \code{N}$, for a total of
$O(2^{H(1/2 - \alpha)d})$ sketches (via Lemma \ref{lem:net-size}).
By setting the failure probabilty for each sketch as $\delta=1/2^{\alpha d}$ and then taking
a union bound over the \net{} we achieve probability at least $1-\delta$.
\end{proof}

We remark that similar results are possible for the other functions
considered, $\ell_p$ frequency estimation, $\ell_p$ heavy hitters and $\ell_p$ sampling.
The key insight is that all these functions depend at their heart on
the quantity $f_j/\|f\|_p$, the frequency of the item at location $j$
divided by the $\ell_p$ norm.
If we evaluate this quantity on a superset of columns,
then both the numerator and denominator may shrink or grow, in the
same ways as analyzed in Lemma~\ref{lem:rounding-distortion}, and
hence their ratio is bounded by the same factor, up to a constant.
Hence, we can also obtain (multiplicative) approximation algorithms 
for these problems with similar behavior.

\medskip
\noindent
\textbf{Illustration of Bounds.}
First, observe that, irrespective of the problem $P$, the number of
sketches needed is sublinear in $2^d$.
This is due to the fact that the entropy $H(1/2 - \alpha) < 1$ for
$\alpha>0$, so the size of the net $|\code{N}| < 2^d$.
For $0 \leq p \leq 2$, we have $\beta$-approximate sketches with
$\beta=(1+\epsilon)$ whose size is $\tilde{O}(\eps^{-2})$, which is
constant for constant $\epsilon$. 
For example, we obtain a $2^{\alpha d}$ approximation (ignoring small
constant factors) for $F_0$ in space
$O(2^{H(1/2 - \alpha)}d)$, using for instance the
$(1+\epsilon)$-approximate sketch from 
\cite{kane2010optimal} which requires $O(\eps^{-2} + \log n')$ bits for an input over domain $\{1,\dots,n'\}$.
Since $n' \leq 2^d$, and setting $\epsilon = 1$, we obtain the
approximation in space $O(d 2^{H(1/2 - \alpha)d})$.
This is to be compared to the bounds in Section~\ref{sec: f0-bounds},
where it is shown that (binary) instances of the projected $F_0$
problem require space $2^{\Omega(d)}$. 
These results show that the constant hidden by the $\Omega()$ notation
is less than $1$.

In Figure \ref{fig:space-approx-tradeoff} we illustrate the general
behavior of the bounds for $d=20$.
We plot the \emph{relative space} by $2^{H(1/2 - \alpha)}/2^d$ while varying $\alpha$ over 
$(0,1/2)$ (plotted in the leftmost pane).
This shows the space reduction in using the \net{} approach compared to nai\"vely storing 
all $2^d$ queries.
The central pane shows how the approximation factor $2^{\alpha d}$ (on
a log scale) varies with $\alpha$.
We plot the space-approximation tradeoff in the rightmost pane and the approximation factor is 
again plotted on a $\log_2$-scale.
This plot suggests that if we reduce the space by a factor of $4$ (i.e., permit relative space $2^{-2}$)
then the approximation factor is on the order of 10s.
Meanwhile, if we use relative space $2^{-8}$, then the approximation remains on the order of hundreds:
this is a substantial saving as the number of summaries kept for the approximation is $2^{12} = 4096 \ll 2^{20} \approx 10^6$.

\eat{
Although we will quantify $s$, the size of a single sketch, subsequently, it should be thought of as being 
small relative to the size of the net so its inclusion does not adversely affect the space bound.
Secondly, the factor $r$ is a problem dependent parameter and from Lemma \ref{lem:rounding-distortion} is 
typically a small exponential dependent upon the granularity of the \net{}.
}

\eat{
\subsection{Specific Examples}
Theorem \ref{thm:projected-freq-net} provides a general framework for approximating problems $P$ for general
projected frequency estimation over column projections $C \subseteq [d]$.
Here we provide corollaries which instantiate the framework for the key problems of projected $F_0$
and $F_p$ estimation for all other $p$.

\begin{corollary}
    Let $A \in \{0,1\}^{n \times d}$ be input data, $C \subseteq [d]$ be a projection query and 
    $\code{N}$ be an \net{}.
    Algorithm \ref{alg:projected-freq-rounding} achieves a $[k_1 r_l, k_2 r_u]$ approximation factor in space
    In space $O(s 2^{H(1/2 - \alpha)d})$ where:
    \begin{enumerate}
        \item{$P=F_0(A,C)$ has $s = \tilde{O}(\eps^{-2}), k_1 = 1-\eps, k_2 = 1+\eps,
        r_l = 1, r_u=2^{\alpha d}$} \label{item: projected-f0-corr}
        \item{For $p \ge 1, P=F_p(A,C)$ has $s = \tilde{O}(\eps^{-2}), k_1 = 1-\eps, k_2 = 1+\eps,
        r_l = 2^{\alpha d (1 - p)}, r_u = 1$} \label{item: projected-fp-gt-corr}
        \item{For $p < 1, P=F_p(A,C)$ has $s = \tilde{O}(\eps^{-2}), 
        k_1 = 1-\eps, k_2 = 1+\eps,  r_l = 1, r_u = 2^{\alpha d (1 - p)}$} \label{item: projected-fp-lt-corr}
    \end{enumerate}
\end{corollary}

\begin{proof}
    Lemma \ref{lem:rounding-distortion} provides the bounds for $(r_l,r_u)$ in each of these instances so 
    it suffices to obtain space bounds and associated $k_1,k_2$ for specific sketches.
    \textbf{Case \ref{item: projected-f0-corr}.}
    Apply a sketch such as from \cite{kane2010optimal} which achieves $k_1,k_2$ as stated in 
    $s = O(\eps^{-2} + \log n')$ bits for a stream over domain $\{1,\dots,n'\}$.
    All $U \in \code{N}$ are defined over $\{0,1\}^{|U|}$ so the canonical mapping from 
    $\{0,1\}^{|U|}$ into $\{1,2,\dots,2^{|U|} \}$ to generate a stream which is permissible for 
    the standard streaming problem at a space cost of $O(\eps^{-2} + \log |U|)$.
    Additionally, all $U \in \code{N}$ have $|U| \le {d/2 \choose d/2 - \alpha d}$.
    By a standard bound on binomial coefficients and setting $N = 2^d$, 
    \begin{align*}
        \frac{d/2}{d/2 - \alpha d} 
        &\le \left( \frac{e d/2}{d/2 - \alpha d}   \right)^{d/2 - \alpha d} \\
        &\le \left( \frac{2^2}{1- 2\alpha}\right)^{d/2 - \alpha d} \\
        &= \left( \frac{N}{1-2\alpha}\right)^{1 - 2\alpha}.
    \end{align*}
    Thus, the space for sketching a single projection query $U \in \code{N}$ is 
    $O(\eps^{-2} + (1 - 2\alpha) \log \frac{N}{1-2\alpha})$.
    Combining this with the size of the \net{} $\code{N}$ yields total space
    $O(N^{H(1/2 - \alpha)}(\eps^{-2} + (1 - 2\alpha) \log \frac{N}{1-2\alpha}))$.

\end{proof}

\newpage

(NOTE: May want to change notation in the following for consistency with remainder of paper)
Now we can present the algorithm for estimating projected $F_0$ by set rounding.
This is given in Algorithm \ref{alg:set-round-f0}.
For a given subset of $[d]$, let $F_0(A,U)$ denote the distinct number of strings 
observed in $A$ on the projection $U$.
If this is known prior to observing the data, one can map into a higher alphabet and 
use a streaming algorithm such as in \cite{kane2010optimal} to obtain a $1 \pm \eps$-
approximation to $F_0(A,U)$.
This enables the following theorem:

\begin{theorem}
    \label{thm:rounding-projected-f0}
    Let $A \in \{0,1\}^{n \times d}$ be input data and $C$ denote a column projection query.
    Suppose that $\code{N}$ is an $\alpha$-net for $\powerset{[d]}$.
    If $C \in \code{N}$ then the projected $F_0$ problem can be estimated to within $1 \pm \eps$
    relative error.
    Otherwise, let $C'$ be an $\alpha$-neighbour for $C$ in $\code{N}$, then the estimate $Z$ satisfies
    \[ (1-\eps) F_0(A,C) \le  Z \le (1+\eps) 2^{\alpha d} F_0(A,C).\]
    The space required is $\tilde{O}(2^{H(1/2 - \alpha)d} \eps^{-2})$.
\end{theorem}
\begin{proof}
If $C \in \code{N}$ then we have obtained a sketch for this column projection so can use it to answer 
the query to the stated accuracy.
If not, then $C$ is rounded to an $\alpha$-neighbour $C'$.
Since $C' \supset C$ we know $F_0(A,C) \le F_0(A,C')$ and $F_0(A,C') \le 2^{\alpha d} F_0(A,C)$,
trivally.
However, we have not evaluated $F_0(A,C')$ but rather an estimate $Z$ which lies in 
$[(1-\eps)F_0(A,C'), (1 + \eps)F_0(A,C') ]$.
Combining $Z$ with the trivial bounds yields the claimed accuracy.

For the space bound, we need to retain a sketch for every set in the $\alpha$-net $\code{N}$.
Each sketch requires $O(\eps^{-2} + \log n)$ bits where $n$ is the domain over which the stream
is defined \cite{kane2010optimal}.
All $U \in \code{N}$ are defined over $\{0,1\}^{|U|}$ so we can use the canonical mapping from 
$\{0,1\}^{|U|}$ into $\{1,2,\dots,2^{|U|} \}$ to generate a stream which is permissible for 
the standard streaming problem at a space cost of $O(\eps^{-2} + \log |U|)$.
Additionally, all $U \in \code{N}$ have $|U| \le {d/2 \choose d/2 - \alpha d}$.
By a standard bound on binomial coefficients and setting $N = 2^d$, 
\begin{align*}
    \frac{d/2}{d/2 - \alpha d} 
    &\le \left( \frac{e d/2}{d/2 - \alpha d}   \right)^{d/2 - \alpha d} \\
    &\le \left( \frac{2^2}{1- 2\alpha}\right)^{d/2 - \alpha d} \\
    &= \left( \frac{N}{1-2\alpha}\right)^{1 - 2\alpha}.
\end{align*}
Thus, the space for sketching a single projection query $U \in \code{N}$ is 
$O(\eps^{-2} + (1 - 2\alpha) \log \frac{N}{1-2\alpha})$.
Combining this with the size of the $\alpha$-net $\code{N}$ yields total space
$O(N^{H(1/2 - \alpha)}(\eps^{-2} + (1 - 2\alpha) \log \frac{N}{1-2\alpha}))$.

\end{proof}

\begin{remark}
    Since we take $N = 2^d$, the guarantee from Theorem 
    \ref{thm:rounding-projected-f0}
    $(1-\eps) F_0(A,C) \le  Z \le (1+\eps) N^{\alpha} F_0(A,C)$ is a sublinear 
    approximation and the parameter $\alpha$ characterises the space-approximation tradeoff.
\end{remark}

\begin{algorithm}[htb]
    \KwIn{Data $A \in \{0,1\}^{n \times d}$, parameter $\alpha \in (0,1/2)$}
    \SetKwFunction{FMain}{ProjectedF0}
    \SetKwProg{Fn}{Function}{:}{}
    \Fn{\FMain{$A,\alpha$}}{
        Generate an $\alpha$-net $\code{N}$ \\
        For every $U \in \code{N}$ evaluate an $F_0$ sketch to 
        estimate $F_0(A,U)$ within $1 \pm \eps$ relative error \\
        Given a projection query $C$ after observing $A$: \\
        Obtain $C'$, a minimal difference neighbour to $C$ in $\code{N}$ \\
        \Return{$F_0(A,C')$ to $1 \pm \eps$ relative error}
    }
     \caption{Projected $F_0$ by set rounding}
     \label{alg:set-round-f0}
\end{algorithm}
}

\redit{
\section{Concluding Remarks}

We have introduced the topic of projected frequency estimation, with
the aim of abstracting a range of problems involving computing
functions over projected subspaces of data.
Our main results show that these problems are generally hard, in terms
of the space requirements: in most cases, we require space which is
exponential in the dimensionality $d$ of the input.
However, interestingly, the exact dependence is not as simple as $2^d$:
we show that coarse approximations can be obtained whose cost is
substantially sublinear in $2^d$. Letting $N = 2^d$, our upper and
lower bounds establish that the space complexity for a number of problems
here is polynomial in $N$, though substantially sublinear. 
And, in a few special cases ($\ell_p$ frequency estimation for $p \leq
1$), a sufficiently constant-sized sample suffices 
for accurate approximation of projected frequencies.
It remains an intriguing open question to close the gaps between the 
upper and lower bounds,
and to find the exact form of the polynomial dependence on $N$ for
these problems. 
}

\paragraph{Acknowledgements.}
We thank S. Muthukrishnan and Jacques Dark for helpful discussions
about this problem.
The work of GC and CD was supported by European Research Council grant ERC-2014-CoG 647557.
The work of DW was supported by NSF grant No. CCF-1815840, National Institute of Health grant 5R01HG 10798-2, and a Simons Investigator Award. 

\newpage
\bibliographystyle{ACM-Reference-Format}
\bibliography{sample-base}

\appendix

\section{Omitted Proofs}

\subsection{Omitted Proof for Section~\ref{sec:hhupper}}
\label{sec:uniform-sampling-proof}

\begin{theorem}[Restated Theorem \ref{thm: frequency-est-uniform-sample}] 
Let $A \in \{0,1\}^{n \times d}$ be the input data and let $C \subseteq [d]$ be a
given column query.
For a given string $b \in \{0,1\}^{C}$, the absolute frequency of $b$,
$f_{e(b)}$,
can be estimated up to $\eps \|f\|_1$ additive error using a uniform sample of size
$O(\eps^{-2} \log(1/\delta))$ with probability at least $1-\delta$.
\end{theorem}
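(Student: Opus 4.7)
The plan is to use a straightforward uniform sampling procedure whose analysis reduces to a standard Chernoff/Hoeffding bound. During the streaming phase, before $C$ and $b$ are revealed, maintain a uniform random sample of $t$ complete rows of $A$ drawn independently with replacement (so sampling rate $\alpha = t/n$). Once the query $(C,b)$ arrives, project each sampled row onto $C$, count the number $g$ of sampled rows whose projection equals $b$, and return the rescaled estimator $\hat f_{e(b)} = g/\alpha = (n/t)\, g$. The crucial structural observation that makes this valid for post-hoc queries is that the sampling rule does not depend on $C$ in any way, so the sample is prepared offline and evaluated online.

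The estimator is unbiased. Define indicator variables $X_1,\dots,X_t$ where $X_j=1$ iff the $j$-th sampled row, after projection onto $C$, equals $b$. Under sampling with replacement the $X_j$ are i.i.d.\ Bernoulli with parameter $f_{e(b)}/n$, so $\E[g] = t f_{e(b)}/n$ and $\E[\hat f_{e(b)}] = f_{e(b)}$. Note also that $\|f(A,C)\|_1 = n$ for every choice of $C$ since each row contributes weight exactly one to the frequency vector; thus an additive error of $\eps n$ is the same as an additive error of $\eps \|f\|_1$.

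The concentration then follows from the additive Chernoff--Hoeffding inequality applied to the bounded independent variables $X_j \in \{0,1\}$:
\[
\Pr\!\left[\,\bigl|g - \tfrac{t}{n} f_{e(b)}\bigr| \geq \eps t\,\right] \;\leq\; 2\exp\!\bigl(-2\eps^2 t\bigr).
\]
Rescaling by $n/t$ gives $\Pr[\,|\hat f_{e(b)} - f_{e(b)}| \geq \eps n\,] \leq 2\exp(-2\eps^2 t)$. Choosing $t = \lceil (2\eps^2)^{-1} \ln(2/\delta)\rceil = O(\eps^{-2} \log(1/\delta))$ makes this failure probability at most $\delta$, as required. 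Each sample is a single row of $d$ bits, and for the purposes of this frequency-estimation statement the relevant quantity is the number of samples, matching the claimed bound.

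Honestly there is no real obstacle here; the only minor point to flag in a write-up is being explicit that $\|f\|_1 = n$ (so one can state the bound as either $\eps n$ or $\eps \|f\|_1$) and that since the sample is built without reference to $C$ the procedure is legal in the post-query model. Everything else is a textbook Hoeffding calculation.
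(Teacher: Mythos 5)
Your proof is correct and follows essentially the same route as the paper's: uniform row sampling with replacement (independent of $C$), an unbiased rescaled count estimator, and an additive Chernoff--Hoeffding bound yielding $t = O(\eps^{-2}\log(1/\delta))$, together with the observation that $\|f\|_1 = n$. The only differences are cosmetic (you carry the sharper constant $2\eps^2 t$ in the exponent), so nothing further is needed.
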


\begin{proof}
Let $T = \{i \in [n] : A_i^C = b \}$ be the set of indices on which the
projection onto query set $C$ is equal to the given pattern $b$.
Sample $t$ rows of $A$ uniformly with replacement at a rate $q = t/n$.
Let the (multi)-subset of rows obtained be denoted by $B$ and the matrix formed
from the rows of $B$ be denoted $\hat{A}$.
\eat{Since the rows are sampled i.i.d., the expected size of $B$ is $t$.}
For every $i \in B$, define the indicator random variable $X_i$ which is
$1$ if and only if the randomly sampled index $i$ \blit{satisfies
$A_i^C = b$, which occurs with probability $|T|/n$.}
\eat{is included in the random sample, and is in the
set $T$.
That is, $t_i = 1$ if and only if $i \in \hat{T} := T \cap B$.}
Next, we define $\hat{T} = T \cap B$ so that
$|\hat{T}| = \sum_{i = 1}^t X_i$ and the estimator
$Z = \frac{n}{t}|\hat{T}|$ has $\E(Z) = |T|$.
Finally, apply an additive form of the Chernoff bound:
\begin{align*}
\prob \left( |Z - \E(Z)| \ge \eps n \right) &=
\prob \left( \left| \frac{n}{t}|\hat{T}| - |T| \right| \ge \eps n  \right) \\
      &= \prob \left( \left| |\hat{T}| - \frac{t}{n}|T| \right| \ge \eps t \right) \\
      &\le 2 \exp \left(-\eps^2 t \right).
\end{align*}
Setting $\delta = 2 \exp \left(-\eps^2 t \right)$ allows us to choose
$t = O(\eps^{-2} \log(1/\delta))$, which is independent of $n$ and $d$.
\eat{Treating $\eps$ and $\delta$ as constants, this space bound is also constant.}
The final bound comes from observing that $\|f\|_1 = n, f_{e(b)} = |T| $ and
$\hat{f}_{e(b)} = Z$.
\end{proof}

\eat{

\section{Omitted Proof for Section \ref{sec: code-property}} \label{sec: code-existence}
For our lower bounds we need a sufficiently large code with low weight and
a small number of collisions.
Recall that $\mathcal{B}(d,k)$ is the set of binary codewords with length $d$
and weight $k$.

\begin{lemma}[Restated Lemma \ref{lem: hh-code-existence}]
Fix $\epsilon,\gamma \in (0,1)$ and
let $\code{C} \subseteq \mathcal{B}(d,\epsilon d)$ such that
any two distinct $x,y \in \code{C}$ have $|x \cap y| \le  (\epsilon^2 + \gamma) d$.
\blit{There exists such a code $\code{C}$ with size $2^{O(\gamma^2 d)}$ which}
can be instantiated by sampling sufficiently many
words iid at random from $\mathcal{B}(d,\epsilon d)$.
\end{lemma}

}

\eat{
\section{Omitted Proofs for Section \ref{sec: f0-lower-bound}} \label{sec: f0-proofs}

Here we present the basic proofs of corollaries to Theorem
\ref{thm: lower-bound-deterministic-fixed-query-size}.
These results use the same encoding as in Theorem
\ref{thm: lower-bound-deterministic-fixed-query-size}, however at certain points
of the calculations the parameter setttings are slightly altered to obtain
different guarantees.

\begin{corollary}[Restated Corollary \ref{cor: lower-bound-deterministic}]
Let $Q \ge d/2$ be an alphabet size.
There exists a choice of input data $A \in [Q]^{n \times d}$
for the projected $F_0$ problem over $[Q]$ so that any algorithm which is given
a query of size $d/2$ and
can achieve an approximation factor of $2Q/d$ requires space
$\tilde{\Omega}(2^d)$.
\end{corollary}

\begin{proof}
Repeat the argument of
Theorem \ref{thm: lower-bound-deterministic-fixed-query-size} with $k=d/2$.
The approximation factor from Equation \eqref{eq: approximation-factor} becomes:
  $\Delta = \frac{2 Q}{d}$.
The code size for \INDEX is $|\mathcal{C}| =
\Omega\left( \frac{2^d}{\sqrt{d}}\right)$.
\eat{and hence the size of the instance}
\redit{The instance is an array whose rows are the $Q^{d/2}$
child words in $\QstarFull{\code{C}}{Q}$.
Hence, the size of the instance}
to the $F_0$ algorithm is:
$\Theta \left( 2^d Q^{d/2} d^{1/2}  \right)$.
%
%
\end{proof}
\noindent
Corollary \ref{cor: constant-factor-approx} then follows from the Corollary
\ref{cor: lower-bound-deterministic} by setting $Q=d$.

\begin{corollary}
[Restated Corollary \ref{cor: constant-factor-approx}]
A  $2$-factor approximation to the projected $F_0$ problem
\blit{on a query of size $d/2$}
needs space
$\tilde{\Omega}(2^d d^{\frac{d+1}{2}})$ with an instance $A$ whose size is
$\redit{\Theta}(2^d d^{\frac{d+1}{2}})$.
\end{corollary}

\noindent
To avoid the alphabet $Q$ growing too large, we can map down to
a smaller alphabet.
The cost of this is that the instance is a logarithmic factor larger in the
dimensionality.

\begin{corollary}
[Restated Corollary \ref{cor: lower-bound-deterministic-small-alphabet}]
\blit{Let $q$ be a target alphabet size such that} $2 \le q \le |Q|$.
Let $\alpha = Q \log_q(Q) \ge 1$ and
$d' = d \log_q(Q)$.
There exists a choice of input data $A \in [q]^{n \times d'}$ for
which any algorithm for the projected $F_0$ problem
\blit{over queries of size $d/2$} guaranteeing 
error $\tilde{O}(\alpha / d')$ requires space $\tilde{\Omega}\left(2^{d}\right)$.
\end{corollary}

\begin{proof}
Fix the binary code $\mathcal{C} = \mathcal{B}(d,d/2)$ and generate all child
words over alphabet $[Q]$ to obtain the approximation factor $\Delta = 2Q/d$ as
in Corollary \ref{cor: constant-factor-approx}.
For every $x \in \mathcal{C}$ there are $Q^{d/2}$ child words so the
child code \redit{$\code{C}_Q}$ now
has size $n=\Theta(2^d Q^{d/2} / \sqrt{d})$ words.
Since $Q$ can be arbitrarily large, we encode it via a mapping to a smaller alphabet \redit{but over a slightly larger
dimension};
specifically, use a function $[Q] \mapsto [q]^{\log_q(Q)}$ which generates
$q$-ary strings for each symbol in $[Q]$.
Hence, all of the stored strings in $\code{C}_Q \subset [Q]^d$ are equivalent to a collection,
$\mathcal{C}_q$ over
$[q]^{d \log_q(Q)}$.
\redit{Although $|\code{C}_Q| = |\code{C}_q|$, words in
$\code{C}_Q$ are length $d$, while the equivalent word
in $\code{C}_q$ has length $d \log_q(Q)$.}
This \redit{collection of words from $C_q$} now defines the instance
$A \in [q]^{n \times d \log_q(Q)}$.

The space lower bound again follows from the communication problem on the code
\redit{$\mathcal{C}$}.
Then $A$ is taken to be the array whose rows are
words from $\mathcal{C}_q$.
An algorithm for projected $F_0$ would entail a communication protocol
for \INDEX

To obtain this we require at least the space necessary to generate the
code $\mathcal{C}$ for the \INDEX communication problem, thus incurring the
$\Omega(2^{d}/\sqrt{d})$ space bound.
Taking $\alpha = Q \log_q(Q)$ and $d' = d \alpha$  results in an
approximation factor of:
\begin{equation}
  \Delta = \frac{2 Q}{d} = \frac{2 \alpha}{d'}.
\end{equation}
\end{proof}

}

\eat{
\section{Omitted Proofs for Section \ref{sec:hhupper}} \label{sec: lp-upper}

\begin{theorem}[Restated Theorem \ref{thm: frequency-est-uniform-sample}]
Let $A \in \{0,1\}^{n \times d}$ be input data and let $C \subseteq [d]$ be a
given column query.
For a given string $b \in \{0,1\}^{C}$, the absolute frequency of $b$, $f_b$,
can be estimated up to $\eps \|f\|_1$ additive error using a uniform sample of size
$O(\eps^{-2} \log(1/\delta))$.
\end{theorem}

\begin{proof}
Let $T = \{i \in [n] : A_i^C = b \}$ be the set of indices on which the
projection onto query set $C$ is equal to the given pattern $b$.
Sample $t$ rows of $A$ uniformly with replacement at a rate $q = t/n$.
Let the (multi)subset of rows obtained be denoted by $B$ and the matrix formed
from the rows of $B$ be $\hat{A}$.
\eat{Since the rows are sampled iid, the expected size of $B$ is $t$.}
For every $i \in B$, define the indicator random variable $X_i$ which is
1 if and only if the randomly sampled index $i$ \blit{satisfies
$A_i^C = b$, which occurs with probability $|T|/n$.}
\eat{is included in the random sample, and is in the
set $T$.
That is, $t_i = 1$ if and only if $i \in \hat{T} := T \cap B$.}
Next, we define $\hat{T} = T \cap B$ so that
$|\hat{T}| = \sum_{i = 1}^t X_i$ and the estimator
$Z = \frac{n}{t}|\hat{T}|$ has $\E(Z) = |T|$.
Finally, apply an additive form of Chernoff bound:
\begin{align*}
\prob \left( |Z - \E(Z)| \ge \eps n \right) &=
\prob \left( \left| \frac{n}{t}|\hat{T}| - |T| \right| \ge \eps n  \right) \\
      &= \prob \left( \left| |\hat{T}| - \frac{t}{n}|T| \right| \ge \eps t \right) \\
      &\le 2 \exp \left(-\eps^2 t \right).
\end{align*}
Setting $\delta = 2 \exp \left(-\eps^2 t \right)$ allows us to choose
$t = O(\eps^{-2} \log(1/\delta))$, which is independent of $n$ and $d$.
\eat{Treating $\eps$ and $\delta$ as constants, this space bound is also constant.}
The final bound comes from observing that $\|f\|_1 = n, f_b = |T| $ and
$\hat{f}_b = Z$.
\end{proof}

Corollary \ref{cor: lp-frequency-est-uniform-sample} follows immediately by
noting that $\|f \|_1 \le \| f \|_p$ for $0 < p < 1$.
}

\newpage
\subsection{Omitted Proof for Section \ref{sec: Fp-estimation}}
\label{sec: Fp-estimation-bound}
A key step in the proof of Theorem \ref{thm: F_p-estimation-lower-bound} is that in 
Equation \eqref{eq: Fp-value-to-bound}, the expression
\[ 2^{cdp + \epsilon dp + \Theta( (1-p)cd )}\cdot O(d^{1-p}) \]
can be bounded by a manageable power of two.
We formalize this in Lemma \ref{lem: c-bound-Fp}.

\begin{lemma} \label{lem: c-bound-Fp}
Under the same assumptions as in Theorem \ref{thm: F_p-estimation-lower-bound},
there exists
a small constant $c > 0$ which bounds Equation \eqref{eq: Fp-value-to-bound} by at most
$2^{(1- \alpha)\epsilon d}$ for some $\alpha > 0$.
\end{lemma}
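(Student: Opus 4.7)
The plan is to pass to logarithms and reduce the claim to a straightforward linear inequality in~$d$. Taking $\log_2$ of Equation~\eqref{eq: Fp-value-to-bound} and absorbing the polynomial $O(d^{1-p})$ as an additive $O(\log d)$ term gives the exponent
\[
cdp + \epsilon dp + \Theta\bigl((1-p)\,cd\bigr) + O(\log d)
\;=\; \epsilon dp \;+\; K(p)\,cd \;+\; O(\log d),
\]
where $K(p) := p + \Theta(1-p)$ is a positive constant depending only on~$p$ (this is the step that uses $0 < p < 1$ being fixed throughout Theorem~\ref{thm: F_p-estimation-lower-bound}). The target bound $2^{(1-\alpha)\epsilon d}$ then becomes the requirement
\[
K(p)\,c \;+\; O\!\left(\frac{\log d}{d}\right) \;\le\; (1 - \alpha - p)\,\epsilon.
\]

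Next I would exploit the slack between $p$ and~$1$ to select the parameters. Choosing $\alpha := (1-p)/2$ makes the right-hand side equal to $(1-p)\epsilon/2 > 0$, so it suffices to pick $c$ small enough that $K(p)\,c \le (1-p)\epsilon/4$ and then take $d$ large enough that the $O(\log d / d)$ term is absorbed into the remaining $(1-p)\epsilon/4$ of slack. Both choices are uniform in~$d$ once $\epsilon$ and $p$ are fixed, so the inequality holds with the claimed values of $c$ and $\alpha$.

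The main obstacle is compatibility with the code-existence constraint $c > \epsilon^2$ inherited from Lemma~\ref{lem: hh-code-existence}, which forces the two-sided condition $\epsilon^2 < c \le (1-p)\epsilon/(4K(p))$. This window is non-empty precisely when $\epsilon < (1-p)/(4K(p))$. Since the lower bound proof of Theorem~\ref{thm: F_p-estimation-lower-bound} is free to take $\epsilon$ arbitrarily small — the resulting $|\code{C}| = 2^{\Omega_c(d)}$ simply absorbs the shrinking constant into the $\Omega(\cdot)$ — I would first fix a sufficiently small $\epsilon$, then set $c$ slightly above $\epsilon^2$ (say $c = 2\epsilon^2$, with $\gamma = \epsilon^2$ in Lemma~\ref{lem: hh-code-existence}), and verify that this $c$ still satisfies $K(p)\,c \le (1-p)\epsilon/4$. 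A one-line check of the two inequalities $\epsilon^2 < c$ and $K(p)\,c \le (1-p)\epsilon/4$ then closes the argument.
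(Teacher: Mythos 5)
Your proof is correct and follows essentially the same route as the paper's appendix proof: pass to exponents, reduce to a linear inequality among the coefficients of $d$, and take $c$ small enough — though yours is cleaner in that it treats the $\alpha$ of the target $2^{(1-\alpha)\epsilon d}$ as a free parameter (fixed to $(1-p)/2$) rather than conflating it with the binomial-coefficient constant $2+\log_2(1/c)$ as the paper does, and it explicitly verifies compatibility with the constraint $c>\epsilon^2$ inherited from Lemma~\ref{lem: hh-code-existence}, which the paper's proof omits. The one caveat is that the constant hidden in $\Theta((1-p)cd)$ arises from $\log_2{d \choose cd}\le cd\log_2(e/c)$ and hence depends on $c$, so your $K$ is really $K(p,c)\approx p+(1-p)\log_2(e/c)$ rather than a function of $p$ alone; this does not break the argument, since $c\log_2(1/c)\to 0$ as $c\to 0$, but the ``one-line check'' at the end should use this $c$-dependent constant.
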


\begin{proof}
Here we use base-$2$ logarithms and let $0 < c < 1$ be a small constant which we
need to bound.
Also, let $0 < p < 1$ be a given constant.
Observe that the $O(d^{1-p})$ term only contributes positively in the
exponent term of \eqref{eq: Fp-value-to-bound}
so we can ignore it from the calculation.
Write $2^{\Theta(cd(1-p))} = 2^{cd \alpha (1-p)}$ for $\alpha > 0$.
This follows from:
\begin{equation}
  {d \choose cd} \le \left(\frac{e d}{cd}\right)^{cd}
                 \le 2^{(2 + \log \frac{1}{c})cd} \label{eq: alpha-bound}
\end{equation}
so let $\alpha = 2 + \log \frac{1}{c}$.
For clarity, we proceed by using the trivial identity $1 - (1 - \nu) = \nu$ and show that
$1 - \nu > 0$ for $\nu$ a function of $c,p,d$.
We need to ensure:
\begin{equation}
cpd + {\epsilon dp} + \alpha cd(1-p) \le (1-\alpha){\epsilon d}.
\end{equation}
This amounts to showing that:
\[ \nu \triangleq cp/\epsilon + p + \alpha c(1-p)/\epsilon \le (1-\alpha)\]
Now, $\nu = p(c/\epsilon + 1 -  \alpha c/\epsilon) + \alpha c/\epsilon$ and we require $\nu < 1$.
We may enforce the weaker property of $p(c/\epsilon + 1 -  \alpha/\epsilon) < 1$ because
$c > 0$ and for $c < 4$ we also have $\alpha > 0$ (inspection on Equation
\eqref{eq: alpha-bound}) so $ \alpha c/\epsilon > 0$, and so can be omitted.
Solving for $c$ we obtain $c(1-\alpha) < \epsilon (1/p - 1)$.
Recalling the definition of $\alpha$ this becomes:
\begin{equation}
c(\log c - 1) < \epsilon(1/p - 1)
\end{equation}
from which positivity on $c$ yields $c \log c < \epsilon(1/p-1)$.
Hence, it is enough to use $c < \epsilon(1/p - 1)$.
\end{proof}

\eat{
\section{Upper Bound for Fixed Query Sizes}

For the general problem, it suffices to use compact summary algorithms
for each possible column set $C$.
Such sketches are known to exist for many of our problems of interest,
such as $F_0$ and other frequency moments.
The number of sketches needed can be bounded directly as follows.

\begin{theorem}
Let $A \in \{ 0,1 \}^{n \times d}$ be the instance for the projected
problem.
Fix a query size $k$ and let $C$ be a set of queries.
Then the number of sketches required to approximate the projected
problem is at most $(ed/k)^k$ if $C$ contains queries of size at most $k$.
\end{theorem}

\begin{proof}
If $C$ contains queries of size at most $k$ then we must
generate a sketch for all subsets of $[d]$ up to and including cardinality $k$.
Therefore, the number of sketches necessary is:
\begin{equation}
  \sum_{i=0}^k {d \choose i} \le \left(\frac{e d}{k}\right)^{k}.
\end{equation}

\end{proof}
}

\end{document}